\documentclass[preprint,12pt,authoryear]{elsarticle}

\makeatletter
\def\ps@pprintTitle{%
  \let\@oddhead\@empty
  \let\@evenhead\@empty
  \let\@oddfoot\@empty
  \let\@evenfoot\@oddfoot
}
\makeatother

\usepackage[a4paper]{geometry}
\usepackage{hyperref}
\usepackage{url}
\usepackage{amsmath}\usepackage{amsfonts}
\usepackage{amssymb}
\usepackage{graphicx}
\usepackage{algpseudocode}
\usepackage{algorithm}
\usepackage{caption}
\usepackage{subcaption}
\usepackage{tikz}
\usetikzlibrary{arrows,automata}

\setcounter{MaxMatrixCols}{30}

\newtheorem{theorem}{Theorem}

\newtheorem{corollary}[theorem]{Corollary}

\newtheorem{definition}[theorem]{Definition}

\newtheorem{proposition}[theorem]{Proposition}

\newenvironment{proof}[1][Proof]{\noindent\textbf{#1.} }{\ \rule{0.5em}{0.5em}}

\newcommand{\blind}[1]{#1}



\begin{document}

\begin{frontmatter}
\title{Close Communities in Social Networks: Boroughs and \textit{2}-Clubs\tnoteref{t1}}
\tnotetext[t1]{This research was  supported by \blind{the Netherlands Organization for Scientific Research (NWO)} under project number \blind{380-52-005 (PoliticalMashup)}.}

\author[uva]{\blind{S. Laan}}\ead{\blind{MSc.Steven.Laan@gmail.com}}
\author[uva]{\blind{M. Marx}}\ead{\blind{maartenmarx@uva.nl}}
\author[uva]{\blind{R.J. Mokken\corref{cor1}}}\ead{\blind{mokken@uva.nl}}

\cortext[cor1]{Corresponding author}
\address[uva]{\blind{ISLA, Informatics Institute, University of Amsterdam, Science Park 904, 1098XH, Amsterdam, The Netherlands}}

\begin{abstract}
The structure of close communication, contacts and
association in social networks is studied in the form of maximal subgraphs of diameter 2 (\emph{2}-clubs), corresponding to three types of close communities: hamlets, social circles and coteries. The concept of borough of a graph is defined and introduced. 
Each borough is a chained union of \emph{2}-clubs of the network and any \emph{2}-club of the network belongs to one borough. Thus the set of boroughs of a network, together with the \emph{2}-clubs held by them, are shown to contain the
structure of close communication in a network. Applications are given with examples from real world network data.\bigskip
\end{abstract}

\begin{keyword}
Social networks, close communication, close communities, boroughs, \emph{2}-clubs, diameter 2,  ego-networks. 
\end{keyword}

\end{frontmatter}

 \section{\textbf{Introduction}}

The last decade has produced an increasing volume of  methods and
algorithms to analyze community structure in social and other networks, as
witnessed by an abundance of recent reviews 
\emph{e.g.}
\citep{Girvan2002,Newman2004,Balasundaram2005,Palla2005,Reichhardt2006,Blondel2008,Leskovec2008,Porter2009,Fortunato2010,Xie2013}.

In this paper we study the structure of close communication, contacts and
association in networks, as represented by simple graphs. \emph{Close
communication} is defined here as contact between nodes at distances of at
most 2, that is by direct contact or by at least one common neighboring
node. \ Such communication is associated with closely-knit groups like
cliques, coteries, peer groups, primary groups and face-to-face communities,
such as small villages and artist colonies. Considered as dense social
networks they can form powerful sources of social capital and support for their
members and serve both quick internal diffusion of social innovation as
well as speedy epidemiological contamination from outside sources.

The parts of a network where close communication can take place are marked by
overlapping subsets of nodes, which all are neighbors of each other or have a
common neighbor in the same subset. These correspond to graphs with a diameter of at most two.

In the following sections we shall characterize this structure and indicate
ways to detect these in social networks.

 Mokken \citep{Mokken1979,Mokken2008} introduced the concept of \emph{k-clubs} of a graph as \emph{maximal} induced subgraphs of diameter at most $k$ of a simple connected graph $G$: 'maximal' in the sense that there is no larger induced subgraph of diameter $k$ which includes them. He also showed that close community networks, in the form of simple graphs of diameter at most two (\textit{2-clubs}), come in three distinct types: \emph{coteries, social circles} and \emph{hamlets}, respectively.\citep{Mokken1980}\
 
Accordingly, the \emph{2-clubs} of a simple graph or network $G$ cover the areas of close communication in that network consisting of non-inclusive, possibly mutually overlapping
\emph{coteries}, \emph{social circles} and \emph{hamlets}.

In the following sections this system of close communication is studied further and we show that it consists of a set of disjoint containers of nonseparable \emph{2-}clubs,  \textit{i.e.} subgraphs that we call \emph{boroughs}, each of
which is formed by a set of edge-chained \emph{2}-clubs (hamlets, social circles
and coteries) of the network $G$. Each (nonseparable) \emph{2}-club of $G$ is included in
exactly one borough of $G$ and each borough consists of a nonseparable union
of overlapping \emph{2}-clubs of $G$. Consequently this system of close
communication of a network can be analyzed by studying its boroughs and the \emph{2}-clubs within each or selected boroughs.
The final sections show applications  with some real networks and conclude with a discussion.

\section{\textbf{Concepts and notation}}

As the representation and analysis of networks will be in terms of simple
graphs, we will summarize the necessary concepts and notation here. (For
standard graph-theoretic background see \emph{e.g.}\citep{Harary1969,Harary1994,Wasserman1994,Diestel2005}.

A social network will be represented by a \emph{simple}
graph, \emph{i.e.} an undirected graph $G=G\left(  V,L\right)  $, without
loops or multiple edges, where $V=V\left(  G\right)  $ is its set of nodes and
$L=L\left(  G\right)  $ is its set of edges $\left(  u,v\right)  ;u,v\in
V\left(  G\right)  ,$ joining nodes $u$ and $v$ in $G.$ Two nodes $u$ and $v$
are \emph{adjacent} if the edge $\left(  u,v\right)  \in L\left(  G\right)  ;$
notation $uv$. An edge $\left(  u,v\right)  $ is \emph{incident} with its
endnodes $u$ and $v$. Let $\left\vert V\right\vert $ denote the size of $G$,
\emph{i.e.} the number of its nodes and $\left\vert L\right\vert $ its number
of edges. Unless specified otherwise we shall assume $\left\vert V\right\vert
=n$ and $\left\vert L\right\vert =m$.\medskip

A subgraph $H=G(V',L')$ of a graph $G$ is a graph such that all its nodes and its edges
are in $G$:\
\[
V'  \subseteq V\left(  G\right)  \text{ and }L' \subseteq L\left(G\right)  \text{.}
\]
If $H$~is a subgraph of $G$ then $G$ is called the \emph{supergraph} of
$H$.\bigskip

If a subgraph \textit{G(V')} of $G$, with $ V'\subseteq V $, has \textit{all} edges $ (u,v) $ with $ (u,v) $ $\epsilon $ $ L $, then \textit{G(V')} is an \textit{induced subgraph} of \textit{G}. Unless stated otherwise, we shall use the term subgraph to denote an induced subgraph and consider only subgraphs \textit{G(V')} with at least three nodes and three edges.\\

A \textit{path} $ P_{l}$ is a sequence of distinct adjacent nodes of G, $  \left\lbrace u,x_{1},x_{2},..,x_{l-1},v\right\rbrace $, and consecutive incident edges $  \left\lbrace (u,x_{1}), (x_{1},x_{2}),..,(x_{l-1},v)\right\rbrace $ joining two nodes $ u $ and $ v $ in $ G $.\ 
 Its \emph{length} is the number $ l $ of its
edges. A \emph{chordless path} $P_{l}$ is a path such that no two
non-successive nodes ($\left\vert i-j\right\vert \neq1$) are adjacent. Two
nodes are connected in $G$ if there is a path joining them. The \emph{distance
}$d_{G}\left(  u,v\right)  =d\left(  u,v\right)  $ \ between two \ nodes $u$
and $v$ of $G$ is the length of a shortest path joining $u$ and $v$ in $ G $. If the
nodes are not connected then $d\left(  u,v\right)  $ is defined $\ $as
$\infty$.
$  $
The diameter $dm\left(  G\right)  $ of $G$ is the largest distance
between nodes in $G$.\\

A \textit{k-club}  in $ G $ is an induced subgraph of \textit{G} of diameter at most $ k $ \citep{Mokken1979,Mokken2008}. It is a \textit{maximal}  \textit{k-club} of $ G $ if there is no larger \textit{k-club} in $ G $ which includes it. A \textit{maximum} \textit{k-club} is one with the largest size in \textit{G}.\\ Unless stated otherwise in this paper, \textit{k-club}, respectively \textit{2-club}, \textit{of G}  will denote a \textit{maximal} \textit{k}-club, or \textit{2}-club, because \textit{k}-clubs in \textit{G}, which are included in larger \textit{k}-clubs, are not of primary interest here. We shall refer to graphs of diameter at most $ 2 $ as \textit{2-clubs}.

A cycle of $G$ is a closed path in $G$ where each node is both a starting and
an endnode in that path and no node occurs more than once. Its length $\left(l\right)  $ is the number of edges (or nodes) of it. The smallest cycle $\left(C_{3}\right)  $, a triangle, has length three. \emph{\ }A graph with cycles is \emph{cyclic. }A cycle which is an induced subgraph of $ G $ is called a chordless cycle or (for $ l > 3 $) a hole of \textit{G} \citep{Nikolopoulos2007}.\\ Unless stated otherwise 'cycle' will denote a $ C_{3} $ or a hole of $ G $.

Any edge $\left(  u,v\right)  $ of a cyclic graph can be a part of multiple cycles, to be denoted as its cycles. Its removal from $ G $ can increase some distances between nodes in $ 
G $. For instance, the distance $ d(u,v) $ then increases from $ 1 $ to $ l-1 $ if the length of its shortest cycle is a $ C_{l} $.\footnote{For related points see \citep{Granovetter1973,Everett1982}}

The \emph{degree} $ d_{G}(u) =d\left(  u\right)  $ of a node $u$
is the number of edges incident with $u$, which in a simple graph is equal to
its number of neighbors. An isolated node has degree $0$. A \emph{pendant} is a
node with a single neighbor and has degree $1.$ The \emph{average degree} of a
graph is $\bar{d}_{G}=\frac{2m}{n}$.\

For a connected graph $G$ the degree $ d_{G}\left(  u\right) $ of a node $ u $ takes values in the interval \[1\leq\delta\leq d_{G}(u)\leq\Delta\leq |V\left(  G\right)|  -1= n-1\text{.}\]
where $\delta$ and $\Delta$ are the minimum and maximum degrees of $G$.

A component of a graph is a maximal connected subgraph. A cutpoint is a node, the removal of which increases the number of components, and a bridge is an
edge with the same property. A graph with cutpoints is called \emph{separable}. Connected graphs without cutpoints are called \emph{nonseparable (n-s)} or, alternatively, \emph{2}-connected or bi-connected, and have minimum degree $\delta$ $\geq2.$ Hence it has no pendants. A \textit{bicomponent} of a graph is a maximal biconnected subgraph and is part of a component of that graph.
Such a (sub)graph is also called a \emph{block}.\ 
Unless specified otherwise we shall assume the simple graph and network to be
connected, thus consisting of a single component.

A connected graph with no cycles (\emph{acyclic}) is called a \emph{tree}.
Each connected graph has a spanning tree, \emph{i.e.} an acyclic subgraph on all nodes of the graph.

A \emph{spanning tree} of a graph $G$ has all nodes of $G$. Every
connected graph has at least one spanning tree. A \emph{shortest spanning
tree} (s.s.t.) of $G$ is a spanning tree with the smallest diameter.

In a \textit{complete} graph $K_{l}$  all $l$ nodes are mutually adjacent and its diameter $dm\left(  K_{l}\right)  =1$. A \textit{clique} of a graph $ G $ is an induced subgraph of $ G $ which is a complete graph. It is a \textit{maximal} clique of $ G $ if there is no larger clique in $ G $ containing it. A \textit{maximum} clique of $ G $ is  one with the largest size.
\\

For a node $u$ $\in V\left(  G\right)  $ we distinguish:
\begin{itemize}
\item the $k$-\emph{neighbors} of $u$:\emph{\ }$V_{k}(u)$ is the set of all nodes
$v\in V(G)$ with\\ $d(u,v)=k, k=1,...,$ $dm(G)$. Note that $u\notin V_{k}(u)$.

\item the $k$-\emph{neighborhood} of\emph{\ }$u$:\emph{\ }$N_{k}(u)=\bigcup
\limits_{i=1}^{k}$ $V_{i}(u)$, the set of all nodes $v\in V(G)$ with
$d(u,v)=1,2,...,k\leq$ $dm(G)$. Note that $u\notin N_{k}(u)$. The \emph{closed
}$k$-\emph{neighborhood} of\emph{\ }$u$ is defined as $\bar{N}_{k}(u)=N_{k}(u)\cup\left\{  u\right\}  $.
\end{itemize}

\medskip

The $k$-\emph{degree} $d_{G}^{\left(  k\right)  }\left(  u\right) = d_{k}\left(  u\right)  =\left\vert N_{k}(u)\right\vert $ is the size of the
$k$-neighborhood of $u$.
We shall in particular consider the $2$-\emph{neighborhoods}
of nodes \textit{u} of $G$: $N_{2}(u)$ and $\bar{N}_{2}(u)$. The $2$-degree of nodes \textit{u} of $G$, $( d_{2}\left(  u\right): u \,\in \,V)$ are bounded by minimum and maximum $2$-degrees
given by $\delta_{2}\left(  G\right)  =$ $\delta_{2}$ and $\Delta_{2}\left(
G\right)  =$ $\Delta_{2}$ in the interval \[2\leq\delta_{2}\leq\Delta_{2}\leq
|V\left(  G\right)|  -1=n-1.\]

The $k$-\emph{ego-network} of $u$ in $G$ is the subgraph $G(\bar{N}_{k}(u))$ induced by the nodes of its closed $k$-neighborhood, to be denoted as  $\mathcal{E}_{k}^{G}(u)$ (or $ \mathcal{E}_{k}(u) $ if the context is clear).                                                 
Note that $u$ thus is
part of its ego-network and is its central ego. We shall in particular consider
the ego-networks $\mathcal{E}(u)=\mathcal{E}_{1}(u)$ and $\mathcal{E}_{2}(u)$ of nodes, with sizes
$\left\vert \bar{N}_{1}(u)\right\vert $ and $\left\vert \bar{N}_{2}
(u)\right\vert $.

\emph{Twinned} ego-networks occur when the ego-networks of two or more nodes
$u_{0},u_{1},...$ coincide:
\[
\mathcal{E}(u_{0})=\mathcal{E}(u_{1})=...
\]
thus forming a single ego-network with multiple egos $u_{0},u_{1},..$. Its ego
nodes are called \emph{twinned nodes} or just \emph{twins}.
The set of
central egos $\left\{  u_{0},u_{1},...\right\}  $ of a twinned  ego-network forms a \textit{clique} and is called
its \emph{center}.  Each center $\left\{  u_{0}
,u_{1},...\right\}  $ can be represented by one of its ego-nodes $u_{0}$. We can
accordingly define a reduced node set $V^{\left(  c\right)  }\left(  G\right)
$ as the set of ego-nodes of $G$, including just a single ego node $u_{0}$ from
each twinned ego-network.

Observe that if $\mathcal{E}(u_{0})=\mathcal{E}(u_{1})=...$ then $\mathcal{E}_{k}(u_{0})=\mathcal{E}_{k}(u_{1})=...$
for $k \geq 2$, so that for twinned ego nodes all their $k$-ego-networks are twinned ego-networks.

 Moreover, as nodes can belong to various (sub)graphs, it should be stressed that the relevant ego-network $ \mathcal{E}_{k}^{H}(u) $ of a node $ u, u\in $  $V(H)\subseteq\, V(G), $ is determined by the particular (sub)graphs $ H $ of $ G $  for which they are induced.

In this paper close communities, such as acquaintance networks, are studied in
the form of simple (sub)graphs of diameter at most 2.\footnote {\label{not:k-Bor} The theorems and corollaries in this paper can be extended and proven for the general case of diameter \textit{k} \textit{e.g.} \textit{k}-clubs, k-clubs, \textit{k}-boroughs, etc.  Given the focus
of this paper on diameter 2, and to simplify presentation and analysis accordingly, we shall formulate our results mainly for this special case.} In a close community of that type the closed 2-neighborhood of each of its members covers its complete population: the 2-ego-network ($ \mathcal{E}_{2}(u) $) of each node $ u $ coincides with the network of that community.

\section{Close communities as \textit{2}-clubs of a network}

Close communities are closely-knit in the sense that every pair of its members are neighbors or has at
least one common neighbor, where the neighboring relationship represents a
durable or stable acquaintance, contact or association relation. They are modeled by \emph{2-clubs}: graphs of diameter at most 2. Mokken \citep{Mokken1980}
characterized such graphs in terms of the diameter $ 2 $, $ 3 $, or $ 4 $ of a shortest spanning tree
(s.s.t.) \emph{i.e.} a spanning tree with smallest possible diameter (assuming $ |V(G)|\geq 3 $), as a measure of
their compactness.

\subsection{Close communities: hamlets, social circles, coteries}

\textit{2}-Clubs can only have s.s.t.'s with diameter 2, 3, or 4
corresponding to the following three types:\bigskip

  1: \emph{Coteries}. A \emph{coterie} is a \textit{2}-club with a 
shortest spanning tree of diameter 2, corresponding to a spanning
star, formed by one central node $u_{0}$, which is adjacent to all other
nodes. Hence a coterie is the ego-network $\mathcal{E}\left(  u_{0}\right)  $ of its
central ego $u_{0}$. When a coterie has several s.s.t's, each with central
nodes $u_{0},u_{1},...,$ it is a twinned ego-network with twinned ego nodes
$u_{0},u_{1},...$, with the extreme case of a clique (diameter 1) were
each node is the center of a spanning star. Thus a clique is a special case of
a coterie. The smallest separable coterie is a tree of three points.
The smallest nonseparable coterie is $C_{3}$, a triangle (diameter 1).\\

2: \emph{Social circles.} A \emph{social circle} is a \textit{2}-club with
an s.s.t. of diameter 3. Because every spanning tree with odd diameter has a
center consisting of two adjacent nodes \citep{Harary1969,Harary1994}, a social circle
has at least one \emph{central pair} of neighbours (adjacent nodes) $u_{0}v_{0}$, which
together are adjacent to all the other nodes (a \emph{coupled star}; See Fig. 4 in \citep{Mokken1980}). Hence a social circle is a \textit{2}-club, such that there is at least one (central) edge $\left(  u,v\right)  $ with $V_1(u)\cup V_1(v)=V(G)$. 
The smallest social
circle is $ C_{4}$, a rectangle (diameter 2).\medskip

3: \emph{Hamlets}. A \emph{hamlet} is a \textit{2}-club with an s.s.t. of
diameter 4. Such an s.s.t. (a \emph{double, 2-step, star}; Fig. 5 in \citep{Mokken1980}),
can be obtained in two steps from any node of the graph as its center.
Hence a hamlet has no central node or a spanning star, nor a central adjacent
pair of nodes on a coupled star. Each node can be used as the starting node and center of an
s.s.t. The smallest
hamlet is $C_{5}$, a pentagon (diameter 2).\\

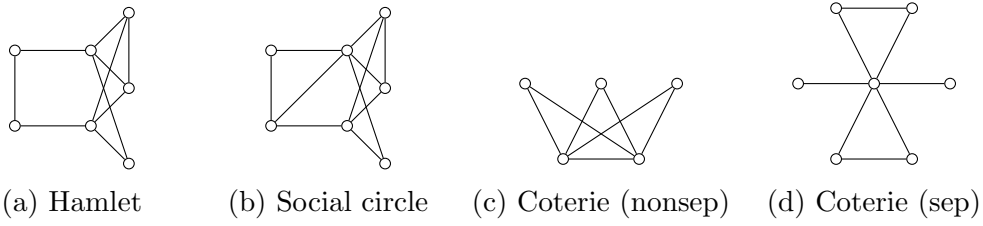
\begin{figure}[h]
\centering
\begin{subfigure}[b]{0.22\textwidth}
\centering
\begin{tikzpicture}[scale=0.5]
\tikzstyle{every state}=[draw,circle,fill=white,minimum size=4pt,
                            inner sep=0pt]
\node[state] (a) at (0,3) {};
\node[state] (b) at (0,1) {};
\node[state] (c) at (2,1) {};
\node[state] (d) at (2,3) {};
\node[state] (e) at (3,4) {};
\node[state] (f) at (3,2) {};
\node[state] (g) at (3,0) {};

\path (a) edge node {} (b)
          edge node {} (d)

      (c) edge node {} (b)
          edge node {} (g)
          edge node {} (f)
          edge node {} (e)

      (d) edge node {} (f)
          edge node {} (g)

      (e) edge node {} (f)
          edge node {} (d);
\end{tikzpicture}
\caption{Hamlet}
\end{subfigure}
\begin{subfigure}[b]{0.22\textwidth}
\centering
\begin{tikzpicture}[scale=0.5]
\tikzstyle{every state}=[draw,circle,fill=white,minimum size=4pt,
                            inner sep=0pt]
\node[state] (a) at (0,3) {};
\node[state] (b) at (0,1) {};
\node[state] (c) at (2,1) {};
\node[state] (d) at (2,3) {};
\node[state] (e) at (3,4) {};
\node[state] (f) at (3,2) {};
\node[state] (g) at (3,0) {};

\path (a) edge node {} (b)
          edge node {} (d)

      (c) edge node {} (b)
          edge node {} (g)
          edge node {} (f)
          edge node {} (e)

      (d) edge node {} (f)
          edge node {} (g)
          edge node {} (b)

      (e) edge node {} (f)
          edge node {} (d);
\end{tikzpicture}
\caption{Social circle}
\end{subfigure}
\begin{subfigure}[b]{0.25\textwidth}	
\centering
\begin{tikzpicture}[scale=0.5]
\tikzstyle{every state}=[draw,circle,fill=white,minimum size=4pt,
                            inner sep=0pt]
\node[state] (a) at (1,0) {};
\node[state] (b) at (3,0) {};
\node[state] (c) at (0,2) {};
\node[state] (d) at (2,2) {};
\node[state] (e) at (4,2) {};

\path (a) edge node {} (b)
          edge node {} (c)
          edge node {} (d)
          edge node {} (e)

      (b) edge node {} (c)
          edge node {} (d)
          edge node {} (e);
\end{tikzpicture}
\caption{Coterie (nonsep)}
\end{subfigure}
\begin{subfigure}[b]{0.22\textwidth}	

\centering
\begin{tikzpicture}[scale=0.5]
\tikzstyle{every state}=[draw,circle,fill=white,minimum size=4pt,
                            inner sep=0pt]
\node[state] (a) at (2,2) {};
\node[state] (b) at (1,0) {};
\node[state] (c) at (3,0) {};
\node[state] (d) at (4,2) {};
\node[state] (e) at (3,4) {};
\node[state] (f) at (1,4) {};
\node[state] (g) at (0,2) {};

\path (a) edge node {} (b)
          edge node {} (c)
          edge node {} (d)
          edge node {} (e)
          edge node {} (f)
          edge node {} (g)

      (b) edge node {} (c)

      (f) edge node {} (e);
\end{tikzpicture}
\caption{Coterie (sep)}
\end{subfigure}
\caption{The three types of 2-clubs}
\label{fig:2Ctypes}
\end{figure}

We summarize the above observations in the following theorem.
\begin{theorem}\label{thm:types2c} Three types of \textit{2}-clubs.
\item (i) A \textit{2}-club is either a coterie, a social circle or a hamlet.
\item (ii) Social circles and hamlets are nonseparable.
\item (iii) Coteries can be separable or nonseparable.
\end{theorem}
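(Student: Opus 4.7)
The plan is to treat the three clauses in order, using the classification of $2$-clubs by the diameter ($2$, $3$, or $4$) of a shortest spanning tree (s.s.t.) that has been set up in the preceding paragraphs.

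For (i), I need to show that every $2$-club $G$ with $|V(G)|\ge 3$ has an s.s.t. of diameter in $\{2,3,4\}$, so that by the three (mutually exclusive) defining conditions it is a coterie, a social circle, or a hamlet. The lower bound is immediate: any spanning tree on $\ge 3$ nodes has at least two edges, hence diameter $\ge 2$. For the upper bound I would build a BFS spanning tree rooted at an arbitrary node $u$; since $dm(G)\le 2$, every other node sits at depth $1$ or $2$, so any two leaves are joined through $u$ by a path of length at most $2+2=4$. This tree witnesses $dm(\text{s.s.t.})\le 4$, giving the trichotomy.

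The crux is (ii), which I would argue by contradiction. Suppose a hamlet or social circle $G$ has a cutpoint $v$, and pick $a,b$ in two distinct components of $G-v$. Every $a$--$b$ path in $G$ traverses $v$, so
\[
d_G(a,b) \;=\; d_G(a,v) + d_G(v,b) \;\ge\; 2.
\]
As $G$ is a $2$-club, $d_G(a,b)\le 2$, forcing $d_G(a,v)=d_G(v,b)=1$. Since $a$ and $b$ were arbitrary within their components, every node of $V(G)\setminus\{v\}$ is adjacent to $v$. Hence $v$ centers a spanning star of diameter $2$, giving $dm(\text{s.s.t.})=2$. But this contradicts the defining s.s.t. diameter $3$ for a social circle and $4$ for a hamlet, so no such $v$ exists.

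For (iii) it suffices to exhibit examples. The path $P_{3}$ (equivalently the star $K_{1,2}$) is a coterie whose middle node is a cutpoint, so separable coteries exist; the triangle $C_{3}$ is a coterie with no cutpoint, so nonseparable coteries also exist. The main obstacle in the whole argument is part (ii): the key insight is that a cutpoint of a $2$-club must be universally adjacent, which is the single graph-theoretic observation pinning down the theorem. Everything else is a direct reading of the definitions together with the standard BFS bound used in part (i).
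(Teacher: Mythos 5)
Your proposal is correct and follows essentially the same route as the paper, which obtains the trichotomy from the s.s.t.-diameter classification (diameter $2$, $3$, or $4$) and settles (ii)--(iii) via the fact, cited from Mokken (1980, p.~6), that a separable \textit{2}-club has a single spanning star whose central node is the cutpoint --- exactly the ``cutpoint is universally adjacent'' observation you prove directly. The only difference is that you supply self-contained arguments (the BFS bound for (i) and the cutpoint argument for (ii)) where the paper summarizes prior observations and cites the literature.
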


Examples of these types are given in Figure \ref{fig:2Ctypes}. If a \textit{2}-club is separable, then it must have a single spanning star and a corresponding single central node, which is its cutpoint \citep{Mokken1980}, p.6). Hence it is a separable ego-network and coterie. Thus only \textit{coteries} can be separable, and then have a single central node, which is its cutpoint (see Figure \ref{fig:2Ctypes},(d) ), while twinned coteries ares always nonseparable (cf. Figure \ref{fig:2Ctypes}(c)) \\

The smallest nonseparable examples of each type are the cycles $C_{3}$ (coterie), $C_{4}$ (social circle), $C_{5}$ (hamlet).\footnote{ Note that $ C_{3} $ has diameter 1 and is a 1-club also.}.
\\

The next theorem shows how the types of nonseparable \textit{2}-clubs are formed by these cycles.
\medskip

\begin{theorem}
\label{thm:k-edges}
Let $ G $ be a nonseparable \textit{2}-club, then for each edge of $ G $ its shortest cycle is  $ C_{3}$, $ C_{4} $, or $ C_{5} $, and
\item (\textit{i}): if $ G $ is a coterie: for each edge this is a triangle $ C_{3} $;
\item (\textit{ii}): if $ G $ is a social circle: for each edge this is a triangle $ C_{3} $ or a rectangle $ C_{4} $;
\item (\textit{iii}): if $ G $ is a hamlet: for each edge this is a triangle $ C_{3} $, a rectangle $ C_{4} $, or a pentagon $ C_{5} $.

\end{theorem}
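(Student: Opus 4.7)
The plan is to establish a uniform upper bound of $C_5$ for every edge first, and then refine to $C_4$ for social circles and $C_3$ for coteries using the central-node and central-edge structures. Since $G$ is nonseparable every edge lies on a cycle, so the shortest cycle through any given edge is well-defined; standard reasoning shows it must be chordless, because any chord would split it into two strictly shorter cycles, one of which still contains the chosen edge.

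For the uniform $C_5$ bound I would fix an edge $(u,v)$, let $C = v_0 v_1 \cdots v_{l-1} v_0$ be its shortest cycle with $v_0 = u$ and $v_{l-1} = v$, and argue by contradiction assuming $l \geq 6$. The nodes $v_1$ and $v_4$ are non-adjacent in $G$ (chordlessness of $C$) but satisfy $d_G(v_1, v_4) \leq 2$ since $G$ has diameter $2$, so they share a common neighbour $w$. The chordlessness of $C$ forces $w \notin V(C)$, because any cycle vertex adjacent to both $v_1$ and $v_4$ would induce a chord. Replacing the arc $v_1 v_2 v_3 v_4$ of length $3$ by the arc $v_1 w v_4$ of length $2$ produces a simple cycle of length $l-1$ that still traverses $(u,v)$, since $(u,v)$ lies on the untouched arc of length $l-3 \geq 3$; this contradicts the minimality of $l$. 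Hence $l \leq 5$, which already establishes (iii) since by Theorem~\ref{thm:types2c} a hamlet is just a nonseparable 2-club of the remaining type.

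For (i), the central ego $u_0$ of the coterie is adjacent to every other node, so any edge $(a,b)$ with $u_0 \notin \{a,b\}$ immediately closes into the triangle $a u_0 b$; if instead $u_0 \in \{a,b\}$, say $a = u_0$, then nonseparability forces $d(b) \geq 2$ and any neighbour $w \neq u_0$ of $b$ is also adjacent to the central $u_0$, again yielding a triangle through $(a,b)$. For (ii), the central edge $(u_0, v_0)$ satisfies $V_1(u_0) \cup V_1(v_0) = V(G)$, so every node is covered by $u_0$ or $v_0$. A short case analysis on which of $u_0, v_0$ covers each endpoint of an arbitrary edge $(a,b)$ produces either a $C_3$ (when one of them is a common neighbour of $a$ and $b$) or a $C_4$ routed through the central edge. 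The only slightly delicate subcase is the edge $(u_0, v_0)$ itself: nonseparability is used either to find a common neighbour of $u_0$ and $v_0$ (giving a $C_3$) or, failing that, an edge between $V_1(u_0) \setminus V_1(v_0)$ and $V_1(v_0) \setminus V_1(u_0)$, which together with the central edge closes into a $C_4$.

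The main obstacle I expect is the chordless shortcut argument in the uniform bound: one must verify simultaneously that the shortcut vertex $w$ lies outside $V(C)$, that the resulting closed walk is a genuine simple cycle, and that the edge $(u,v)$ is preserved on the untouched arc. After that, parts (i), (ii) and (iii) become direct applications of the central-node and central-edge characterisations of coteries and social circles already established.
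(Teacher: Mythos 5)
Your proposal is correct and follows essentially the same route as the paper: a uniform bound on the shortest cycle length coming from the diameter-$2$ condition, followed by the central-node and central-edge (shortest spanning tree) characterisations of coteries and social circles for parts (\textit{i}) and (\textit{ii}). If anything, your chordless-shortcut argument actually proves the step the paper merely asserts (that no edge can lie on a shortest cycle $C_k$ with $k>5$), and you also treat the edges incident to the central node and the central edge itself, which the paper's proof passes over.
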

	
\begin{proof}
No edge of $ G $ can be on a shortest cycle $ C_{k} $ for $ k > 5 $ because then its diameter would be larger than 2.\\
(\textit{i}) \textit{coterie}: $ G $ has a shortest spanning tree (s.s.t.) consisting of a central node adjacent to all other nodes of $ G $. Hence all other edges joining nodes of $ G $ are on at least one triangle $ C_{3} $ of $ G $;\\
(\textit{ii}) \textit{social circle}: $ G $ has an s.s.t. consisting of a central edge $\left(  u,v\right)  $ with $V_1(u)\cup V_1(v)=V(G)$. Hence any other edge joining nodes of $ G $ forms either a triangle with node \textit{u}, \textit{v} or edge $ (u,v), $  or a square  on the edge $ \left( u, v\right)  $;\\
(\textit{iii}) \textit{hamlet}: $ G $ has an s.s.t. such that all other edges joining nodes of $ G $ are on triangles, squares, or pentagons.
\end{proof}\\

We shall call these cycles \textit{basic cycles}, and shall denote the set of cycles $ \lbrace C_{3}, C_{4}, C_{5} \rbrace $ of a \textit{2}-club as its set of basic cycles $ \mathcal{C}_{\left[ 3,5\right] }$, or just its \textit{basic set}.\\ More general: we shall call the set of cycles of type $ \lbrace C_{3}, C_{4}, C_{5} \rbrace $ of a graph $ G $ its \textit{set of basic cycles}, or \textit{basic set} $ \mathcal{C}_{\left[ 3,5\right] }$, which form its nonseparable \textit{2}-clubs, as we shall see below. Moreover, we shall denote by \textit{basic edge of G} any edge $ (u,v) $ of \textit{G} which is on at least one basic cycle of \textit{G}.

\subsection{The \emph{2}-clubs of a graph or network}

 A \emph{k-club} of a simple graph $G$ is a maximal induced subgraph of diameter at most \emph{k} \citep{Mokken1979,Mokken2008}. It was introduced as a generalized clique concept to distinguish it
from $k$-cliques of a graph, which were defined as maximal clusters of nodes of a graph
within \emph{distance }$k~$ in the distance matrix of that graph \citep{Luce1950}.
However, considered as subgraphs $k$-cliques need not be connected, whereas
$k$-clubs, due to the diameter restriction, are warranted to be connected
subgraphs. The $k$-clubs of a network correspond to locally autonomous
subnetworks in the sense that their nodes can communicate or reach
each other within distance \textit{k} along paths involving only nodes of the
$k$-club, and not outside nodes in the larger supernetwork, as would be the
case for $k$-cliques. Occasionally a $k$-club happens to be a $k$-clique as
well, in which special case it is called a \emph{k-clan}.\footnote{ This case was called a sociometric clique by Alba \citep{Alba1973}.}

Recently \emph{k-}clubs have found interest and applications in many network oriented disciplines, such as telecommunication \citep{Balasundaram2006}, biology \citep{Balasundaram2005}, genetics \citep{Pasupuleti2008} , forensic data mining \citep{Memon2006}, web search \citep{Terveen1999}, graph mining \citep{Cavique2009}, and language processing \citep{Miao2004,Gutierrez2011}.\\
\

Above we defined close communication, contact or association in connected networks and graphs as connectedness along paths of at most length 2 and, accordingly, \emph{close communities} as \emph{\textit{2}-clubs} of a graph or network.\\
As such the concept of \textit{2}-clubs of a network is of central importance for the analysis of close communities and close communication structures
in networks. In that analysis, however, the first type of \textit{2}-club (ego-network or coterie) is of subordinate interest compared to the other two, the social circles and hamlets.\\
\paragraph{\textit{Three types, three levels of close communication}} These three types of \textit{2}-club imply different perspectives or levels of local communication:\
\begin{itemize}
\item \textit{Level 1} and most local (ego-network or \textit{coterie}). Coteries in a graph are rather restricted forms of close communities, as they correspond to just all the ego-networks of a graph, which define and span that graph. There is a central ego node and all close communication is possible via that ego within its ego-network: tightly meshed, involving triangles only.
Thus every ego-network $\mathcal{E}_{1}(u) $ of $ G $ is a rather trivial coterie in \textit{G}
with its ego node(s) $u$ as the center of a spanning star joining all its neighbors.
However, only when it is not included in a larger \textit{2-club} of $G$, and therefore  maximal, it is a coterie \textit{of} $G$.

\item \textit{Level 2}: intermediate (\textit{social circle}). There is no central node but instead at least one central pair of nodes: two adjacent neighbors, forming a central edge, which together are adjacent to the other nodes of the social circle. All close communication is possible via two central neighbours within (parts of) their ego-networks: more loosely meshed, along triangles and rectangles.

\item\textit{Level 3} and widest (\textit{hamlet}). In hamlets there are no central nodes or central pairs of nodes and all close communication is between (parts of) ego-networks, widely meshed along triangles, rectangles, pentagons.
\end{itemize}
 
Thus \emph{coteries} are limited forms of close communities, as they correspond to (maximal) ego-networks of $ G $, varying from stars to cliques. \\
Moreover, any ego-network which, as an induced subgraph of \textit{G}, is separable, \textit{i.e.} has a cutpoint, is a coterie of \textit{G}, because any subgraph of $ G $ containing that ego-network will have diameter larger than 2, as can be verified easily.\

In particular. every pendant of  $ G $ promotes the ego-network of its single neighbor to a coterie
of $ G $ (cf Figure \ref{fig:2Ctypes}(d)). Again, long isolated paths $ P_{l} $ in $ G $ are formed by overlapping path segments $ P_{3} $, which are overlapping separable coteries of $ G $, consisting of one central node and two pendant neighbors.\

As a consequence any graph or network will also show a multitude of rather trivial (separable) coteries.
Hence, from a perspective of close communication, the coteries of a network $G$ are relatively elementary, if not trivial, \textit{2}-clubs as such, when compared with the social circles and hamlets of $G$. 
They are confined to the level of local communication \textit{within} their ego-network, whereas the hamlets and social circles involve the wider levels of close communication \textit{between} and across (parts of) different ego-networks of $G$.\\
Our main focus will be on the more proper types of \textit{2}-clubs: social circles and hamlets. Moreover, we will  consider only \textit{2}-clubs with at least three nodes and three edges.\\
 
\subsection{The boroughs of a graph or network}

The nonseparable \emph{2}-clubs of a network or graph $G$ are  contained  in the bicomponents of $G$. We shall now  introduce a new type of maximal subgraph of $G$, always contained in a single bicomponent of $G$, which we call a \emph{borough} of $G$.\\ 
The main result of this section is that each borough contains  nonseparable \emph{2}-clubs of $G$, that each nonseparable \emph{2}-club of $G$ is a nonseparable \textit{2}-club of exactly one borough of $G$, and that both nonseparable \textit{2}-clubs and boroughs consist of \textit{edge chained} basic cycles: $ C_{3} $ (triangles), $ C_{4} $  (squares) and $ C_{5} $ (pentagons).\\

  Two cycles of a graph $ G $ are said to be \emph{edge connected} when they share at least one common edge.
  
 More generally: a pair of basic cycles $C^{\alpha}, C^{\beta} \in\mathcal{C}_{\left[3,  5\right]  }$ of \textit{G} is \emph{edge chained} in \textit{G} if they are edge connected, or there is a sequence of edge connected basic cycles $C^{1},...C^{i},...C^{c}\in\mathcal{C}_{\left[3,  5\right]  }$ of \textit{G}, such that $C^{\alpha}$ is edge connected with $C^{1}$, and $C^{c}$ with $ C^{\beta} $, and each intermediate consecutive pair of basic cycles $C^{i}$,
  $C^{i+1}\in\mathcal{C}^{\left[3,  5\right]  }$ is edge connected as well.\\
  
  Using these concepts we can now define a \textit{borough} in a graph.\footnote{ \citep{Batagelj2007} introduced \textit{k}-gonal connectedness of cycles. Our edge chained connection of basic cycles corresponds to their 5-gonal connectedness.}
    
\begin{definition}\label{Def:Borgh}
A borough in a graph \textit{G} is an induced  subgraph of \textit{G} such that each of its edges is on a shortest cycle $ C_{s}\,\in\,\mathcal{C}_{[3,5]}$, the basic set of \textit{G}, and all pairs of its basic cycles are edge chained in \textit{G}.
\end{definition}
A borough of \textit{G} is \textit{maximal}, denoted as a \textit{borough} (\textit{B}) \textit{of} \textit{G}, if it is not contained in a larger borough of \textit{G}. Unless specified otherwise we shall consider only maximal boroughs of $ G $.\\

A nonseparable \textit{2}-club is a special case of a borough as stated in the next proposition: 
 
\begin{proposition}\label{thm:2c}
A nonseparable \textit{2}-club is a borough of diameter at most 2.
\end{proposition}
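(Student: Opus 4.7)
The plan is to verify the two defining conditions of a borough directly from the structure of a nonseparable \textit{2}-club $G$. Since $G$ has diameter at most $2$ by definition, only the borough conditions need work.

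First I would note that the edge condition is immediate from Theorem \ref{thm:k-edges}: every edge of a nonseparable \textit{2}-club lies on a shortest cycle of length $3$, $4$, or $5$, which is precisely the requirement that each edge sit on a member of the basic set $\mathcal{C}_{[3,5]}$.

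The substantive step is showing that all basic cycles of $G$ are pairwise edge chained. Here the plan is to prove the following local claim: if $e_1=(u,v)$ and $e_2=(v,w)$ are two adjacent edges of $G$, then any basic cycle $C_1$ through $e_1$ and any basic cycle $C_2$ through $e_2$ are edge chained. I would split into two cases using $d_G(u,w)\le 2$. If $u$ and $w$ are adjacent, then the triangle $T=uvw$ is a basic cycle sharing the edge $e_1$ with $C_1$ and the edge $e_2$ with $C_2$, giving the chain $C_1,T,C_2$. If $u$ and $w$ are not adjacent, then $d_G(u,w)=2$, so there exists $z\neq v$ with $u\sim z\sim w$; the rectangle $Q=uvwz$ is a $C_4$, hence basic, and contains both $e_1$ and $e_2$, so $C_1,Q,C_2$ is an edge chain. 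Once this local claim is in hand, the general case follows by walking along shared vertices: for any two edges $e,e'\in L(G)$, connectedness of $G$ yields a sequence $e=f_0,f_1,\dots,f_k=e'$ in which consecutive edges share a vertex, and a straightforward induction using the local claim shows any basic cycle through $e$ is edge chained with any basic cycle through $e'$.

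Combining these observations, $G$ satisfies both clauses of Definition \ref{Def:Borgh} and, since $dm(G)\le 2$ by hypothesis, it is a borough of diameter at most $2$. The main obstacle is the edge-chaining step; the key idea that makes it work is that diameter at most $2$ forces the ``bridge'' cycle between adjacent edges to be either a triangle or a rectangle, and in either case this bridge is itself basic and shares an edge with each side, which is precisely what is needed for the chaining relation.
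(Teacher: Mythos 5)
Your overall plan (edge condition from Theorem \ref{thm:k-edges}, then pairwise chaining via a local claim about edges sharing a vertex, then a walk along the connected graph) is reasonable, but the local claim's Case 2 contains a genuine gap. From $u\not\sim w$ you conclude $d_G(u,w)=2$ and then assert the existence of a common neighbour $z\neq v$ of $u$ and $w$. That does not follow: $v$ itself is a common neighbour realizing the distance $2$, and nothing forces a second one. The smallest hamlet $C_5$ already shows this (consecutive edges $(u,v),(v,w)$ have $v$ as the \emph{unique} common neighbour of $u$ and $w$), and the Petersen graph is a nonseparable \textit{2}-club in which this happens for \emph{every} pair of adjacent edges, since it has no triangles and no quadrilaterals at all. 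In that situation neither the triangle $uvw$ nor a rectangle $uvwz$ exists, so your ``bridge'' cycle is unavailable and the case analysis is incomplete; one must instead argue (using nonseparability, e.g.\ a shortest $u$--$w$ path in $G-v$, or a $C_5$ through the path $u,v,w$) that some basic cycle containing $(u,v)$ is chained to some basic cycle containing $(v,w)$, which is essentially the substance of the proposition and is exactly what is missing.

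A second, smaller point: even when $z\neq v$ exists, the $4$-cycle $uvwz$ is a \emph{basic} cycle only if it is chordless; under the paper's conventions a cycle of $\mathcal{C}_{[3,5]}$ is a $C_3$ or a hole. If $v\sim z$ the quadrilateral has a chord, but then the two triangles $uvz$ and $vzw$, which share the edge $(v,z)$, give the required chain, so this is a one-line repair. For comparison, the paper's own proof argues directly on two non--edge-connected basic cycles, picking an edge in each and invoking common neighbours of their endnodes to produce intermediate edge-connected basic cycles; it is brief, but it does not rest on the false existence claim above. Your main missing piece is a correct treatment of the case where $v$ is the only common neighbour of $u$ and $w$.
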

\begin{proof} Let \textit{G} be a \textit{2}-club. By Theorem \ref{thm:k-edges}  every edge of \textit{G} lies on a basic cycle. Let $ C^{i} $ and $ C^{j} $ be two basic cycles of \textit{G} which are not edge connected in \textit{G} and let $e_{i} $ and $ e_{j} $ be two edges of $ C^{i} $ and $ C^{j} $ respectively. As \textit{G} has diameter 2 and is nonseparable, the endnodes of both edges must have common neigbours  in \textit{G}, on  intermediate edge connected basic cycles  which are edge connected with $ C^{i} $ and $ C^{j} $, thus establishing the edge chained connection between them. Hence $ G $ is a borough.\end{proof}\\

Analogously, two nonseparable \emph{2}-clubs are
said to be \emph{edge connected} when they have at least one common edge.
 Two nonseparable \emph{2}-clubs $H_{0}$ and $H_{k}$ are
\emph{edge chained} if they are edge connected or there is a sequence of edge connected  $2$-clubs
$H_{0},H_{1},..,,H_{i},..H_{k}$, such that  each  consecutive pair 
$H_{i}$, $H_{i+1}$ is edge connected.\\
Thus we can state the following proposition without proof.
\begin{proposition}
\label{thm:bor}
A set of pairwise edge chained nonseparable $2$-clubs is a borough.
\end{proposition}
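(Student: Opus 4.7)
The plan is to show that the union $H = \bigcup_i H_i$ of the given pairwise edge-chained nonseparable 2-clubs satisfies both conditions of Definition~\ref{Def:Borgh}. For the cycle-cover condition, I would observe that every edge of $H$ belongs to at least one $H_i$, and Theorem~\ref{thm:k-edges} furnishes a basic cycle ($C_3$, $C_4$, or $C_5$) through it inside $H_i \subseteq G$; this is then a basic cycle of $G$ through that edge.

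Next, for the edge-chaining of basic cycles in $G$, I would combine two ingredients. By Proposition~\ref{thm:2c}, each $H_i$ is itself a borough, so any two basic cycles lying inside the same $H_i$ are already edge chained in $G$. Moreover, if two consecutive 2-clubs $H_s, H_{s+1}$ along the witnessing edge-chain share an edge $f_s$, then any basic cycle of $H_s$ through $f_s$ and any basic cycle of $H_{s+1}$ through $f_s$ are edge-connected via $f_s$ itself. Given two basic cycles $C^\alpha, C^\beta$ of $H$, I would pick members $H_\alpha, H_\beta$ containing an edge of $C^\alpha$ and of $C^\beta$ respectively, take a witnessing edge-chain $H_\alpha = H_0, H_1, \ldots, H_k = H_\beta$, and in each intermediate $H_\ell$ use the first ingredient to edge-chain an incoming basic cycle on $f_{\ell-1}$ to an outgoing basic cycle on $f_\ell$; concatenating these within-$H_\ell$ chains with the one-step cross-jumps from the second ingredient at each shared edge then yields an edge-chain of basic cycles from $C^\alpha$ to $C^\beta$ in $G$.

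The main obstacle is a bookkeeping point: a basic cycle $C^\alpha$ of $H$ need not lie wholly inside a single $H_i$, so one has to argue first that it contains at least one edge of some $H_i$ (which is immediate from the cycle-cover condition just established) in order to anchor the within-$H_i$ chaining, and likewise for $C^\beta$. A subtler issue, if one reads $H$ strictly as the induced subgraph $G[\bigcup_i V(H_i)]$, is that $G$ may contain extra edges between vertices of distinct $H_i$'s that sit in no single $H_i$; one would also have to check that such edges lie on basic cycles, which follows by a short distance argument for 2-clubs that are directly edge connected, but requires more care along long chains---this is presumably what the authors are passing over by stating the proposition without proof.
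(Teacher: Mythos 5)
The paper states Proposition~\ref{thm:bor} explicitly ``without proof,'' so there is no authorial argument to compare against; your write-up supplies the missing argument, and for the natural reading of the statement it is essentially the intended one and is sound. Covering every edge of the union by a basic cycle via Theorem~\ref{thm:k-edges} (a shortest cycle of length at most $5$ in some $H_i$ stays of length at most $5$ in $G$), chaining basic cycles within each $H_i$ via Proposition~\ref{thm:2c}, and splicing across consecutive members of the witnessing chain through a shared edge $f_s$ (two basic cycles through $f_s$ are edge connected by definition) is exactly the bookkeeping needed, and your anchoring remark --- that a basic cycle of the union need only meet some $H_i$ in one edge to be edge connected to a basic cycle inside that $H_i$ --- closes the only real gap in the concatenation.

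Your final caveat is not a defect of your proof but a genuine imprecision in the statement itself, and it is worth being blunter about it than you are. If one insists on reading the ``set of pairwise edge chained $2$-clubs'' as the induced subgraph $G[\bigcup_i V(H_i)]$, as Definition~\ref{Def:Borgh} formally requires of a borough, the proposition can actually fail: take a long ladder of edge-chained $C_4$'s (each square a maximal nonseparable $2$-club) and add in $G$ a single chord joining the two far ends; that chord lies in the induced subgraph but its shortest cycle in $G$ is long, so the cycle-cover condition is violated. Your short distance argument does rescue the case of two directly edge-connected $2$-clubs (a closed walk of length at most $5$ through the extra edge yields a basic cycle), but it provably cannot extend along long chains. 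So the correct resolution is not ``more care,'' but to read the proposition --- as the paper evidently intends, and as your main argument does --- about the subgraph formed by the union of the nodes and edges of the $H_i$, i.e.\ the edge-induced subgraph on their basic edges; with that reading your proof is complete.
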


 An example of a borough, formed by the three edge chained \textit{2}-clubs of Figure \ref{fig:2Ctypes} is given in Figure \ref{fig:FigBorgh}.  
\begin{figure}[h]
\centering
\begin{tikzpicture}[scale=0.6]
\tikzstyle{every state}=[draw,circle,fill=white,minimum size=4pt,
                            inner sep=0pt]
\node[state] (a) at (0,4) {};
\node[state] (b) at (0,2) {};
\node[state] (c) at (0,0) {};
\node[state] (d) at (1,3) {};
\node[state] (e) at (1,1) {};
\node[state] (f) at (3,3) {};
\node[state] (g) at (3,1) {};
\node[state] (h) at (4,4) {};
\node[state] (i) at (4,2) {};
\node[state] (j) at (4,0) {};
\node[state] (k) at (6,4) {};
\node[state] (l) at (6,2) {};
\node[state] (m) at (7,5) {};
\node[state] (n) at (7,3) {};
\node[state] (o) at (7,1) {};

\path (d) edge node {} (a)
          edge node {} (b)
          edge node {} (c)
          edge node {} (e)
          edge node {} (f)

      (e) edge node {} (a)
          edge node {} (b)
          edge node {} (c)
          edge node {} (g)

      (f) edge node {} (h)
          edge node {} (i)
          edge node {} (j)

      (g) edge node {} (h)
          edge node {} (i)
          edge node {} (j)

      (h) edge node {} (i)
          edge node {} (k)

      (k) edge node {} (i)
          edge node {} (m)
          edge node {} (n)
          edge node {} (o)

      (l) edge node {} (i)
          edge node {} (m)
          edge node {} (n)
          edge node {} (o)

      (m) edge node {} (n);

\end{tikzpicture}
\caption{A Borough (Three edge-connected 2-clubs: Fig. \ref{fig:2Ctypes} :  c, a, b)}
\label{fig:FigBorgh}
\end{figure}
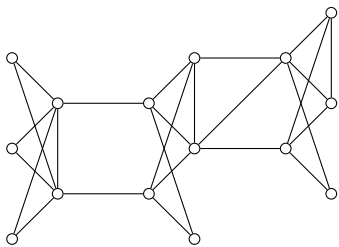
  Thus boroughs and nonseparable \textit{2}-clubs of $ G $ are formed by cycles in its basic set $  C_{\left[3,  5\right]  } = $ $ \left\lbrace C_{3}, C_{4,}, C_{5}\right\rbrace $.

\paragraph{Properties of boroughs of $G$}
 Taking into account the maximality of boroughs of \textit{G}, a number of properties follow from these definitions and previous results. These are listed below as corollaries. They show that, from a perspective of close communication, boroughs can  be seen as larger supercommunities packing or hosting close communities in a social network.\    
\begin{corollary} Given the maximality and nonseparability of boroughs of \textit{G}:
\item \textit{(i)}  Every borough of  a network $G$ is contained in exactly one  bicomponent of $G$;
\item \textit{(ii)} Two boroughs of G can not share a basic cycle of G, and each basic cycle of G is part of only one borough of G.
\item  \textit{(iii)} Each edge of a borough of \textit{G} is on a basic cycle of \textit{G} and all its basic cycles are part of that borough of \textit{G} only. Hence a basic edge of \textit{G} is a basic edge of one and only one borough of \textit{G}.
\item \textit{(iv)} Any edge of $ G $, which is not part of any borough of $ G $, is not a basic edge of \textit{G} but either a bridge or on a shortest cycle $ C_{l}$; $  l \geq\ 6$ and part of the outback of G. 
\item \textit{(v)} Thus the boroughs of \textit{G} are edge induced subgraphs of \textit{G}: they are induced by the basic edges of \textit{G}. The outback of G is induced by the non-basic edges of G.

\end{corollary}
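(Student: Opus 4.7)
The plan is to prove the five clauses sequentially, each one feeding into those that follow.

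For (i), I would first show that every borough $B$ is itself nonseparable. Every cycle in $\mathcal{C}_{[3,5]}$ is 2-connected, and a standard block-theory fact is that two 2-connected subgraphs sharing at least two vertices have a 2-connected union. Applying this inductively along an edge-chaining sequence of basic cycles shows that the whole borough is 2-connected. Since two distinct bicomponents of $G$ meet in at most a single cutpoint, any 2-connected subgraph on $\geq 3$ vertices lies inside exactly one bicomponent.

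For (ii), my route is by maximality. Suppose two distinct maximal boroughs $B_1, B_2$ share a basic cycle $C$. Then every basic cycle of $B_1$ is edge-chained to $C$, and likewise every basic cycle of $B_2$ is edge-chained to $C$, so by transitivity every pair of basic cycles drawn from $B_1 \cup B_2$ is edge-chained in $G$. The induced subgraph on $V(B_1)\cup V(B_2)$ thus meets the borough definition, contradicting the maximality of $B_1$ and $B_2$ unless $B_1 = B_2$. Items (iii)--(v) then cascade. The first clause of (iii) is the definition of a borough, and the second clause is a rephrasing of (ii); the concluding sentence of (iii) follows since a basic edge $e$ sits on some basic cycle $C$, whose unique host borough by (ii) also uniquely hosts $e$. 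For (iv), taking the contrapositive: an edge on a basic cycle $C_s$ belongs to the borough built from the maximal edge-chained class containing $C_s$, so an edge in no borough lies on no basic cycle, meaning it is either a bridge or its shortest cycle has length $\geq 6$. Finally, (v) packages (iii) and (iv): the edge set of a borough consists exactly of its basic edges, so boroughs are edge-induced by the basic edges of $G$, and the non-basic edges induce the outback.

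The main obstacle I foresee is the hidden induced-subgraph subtlety in (ii): I need to verify that the induced subgraph on $V(B_1)\cup V(B_2)$ does not introduce an edge falling outside the borough definition. Any edge $(u,v)$ with $u \in V(B_1)\setminus V(B_2)$ and $v \in V(B_2)\setminus V(B_1)$ must still lie on some basic cycle; pinning this down by combining $(u,v)$ with short paths across the shared cycle $C$ and through 2-clubs of the two boroughs to yield a cycle of length at most $5$ is the technical point requiring the most care. The remainder of the argument is essentially bookkeeping, with Proposition~\ref{thm:bor} and Theorem~\ref{thm:k-edges} providing the structural backbone.
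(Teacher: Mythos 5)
The paper offers no explicit proof of this corollary (it is asserted to follow from Definition~\ref{Def:Borgh}, maximality, and the earlier results), so your attempt must stand on its own. Parts (i) and (iii)--(v) of your plan are sound: the block-theory induction for (i) works because consecutive edge-connected basic cycles share two vertices and, by maximality, every intermediate cycle of a chain already lies in the borough, so the borough is a union of basic cycles that can be ordered so each meets the union of its predecessors in an edge; and (iii)--(v) really are bookkeeping once (ii) is secured.

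The genuine gap is exactly the point you flagged in (ii), and the repair you sketch would fail. If $B_1$ and $B_2$ share a basic cycle $C$, the node-induced subgraph on $V(B_1)\cup V(B_2)$ can contain an edge $(u,v)$ with $u\in V(B_1)\setminus V(B_2)$ and $v\in V(B_2)\setminus V(B_1)$ that lies on \emph{no} basic cycle of $G$: take two long chains of edge-sharing triangles glued along a common triangle $C$, and let $(u,v)$ join the two far tips; every cycle through $(u,v)$ has length far exceeding $5$, so no routing through $C$ or through $2$-clubs of the two boroughs can produce a $C_3$, $C_4$ or $C_5$ on $(u,v)$. Hence the node-induced union need not satisfy Definition~\ref{Def:Borgh}, and your maximality contradiction does not materialize along that route. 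The correct repair is to merge at the edge level: take the subgraph formed by the union of the basic cycles (equivalently, the basic edges) of $B_1$ and $B_2$. Every edge of this union lies on a basic cycle by construction, and all its basic cycles are pairwise edge chained through $C$, so it is a borough properly containing each of $B_1$, $B_2$, contradicting their maximality. This edge-level reading is the one the paper itself intends --- clause (v) states that boroughs are edge induced by the basic edges, and in the discussion of Figure~\ref{fig:Graph_2-Boroughs} the edge $(u,v)$ belongs to only one borough even though both endpoints are touch points of the two boroughs, which is impossible under a strictly node-induced reading --- so once you argue with the union of basic cycles rather than the node-induced subgraph, step (ii) closes and the rest of your argument goes through.
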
\
The maximality and non-separability of boroughs also imply:         
  \begin{corollary}  \label{ns-2C_G-B}  
 Every nonseparable \textit{2}-club of a network  $G$ is a \textit{2}-club of exactly one borough of $G$.
  \end{corollary}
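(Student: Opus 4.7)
The plan is to combine Proposition~\ref{thm:2c} (which already shows that a nonseparable \emph{2}-club is a borough) with Corollaries~(ii) and (iii) above (which partition the basic cycles and basic edges of $G$ among its maximal boroughs). First, I would fix a nonseparable \emph{2}-club $H$ of $G$ and observe, via Theorem~\ref{thm:k-edges}, that every edge of $H$ lies on a basic cycle contained in $H$. Since $H$ is an induced subgraph of $G$, any such cycle has the same length in $G$, so it is also a basic cycle of $G$, and consequently every edge of $H$ is a basic edge of $G$.

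Next, I would re-run the argument in the proof of Proposition~\ref{thm:2c} to confirm that any two basic cycles $C^{i}, C^{j}$ of $H$ are edge chained entirely within $H$: because $H$ is nonseparable and has diameter~$2$, the endnodes of edges on $C^{i}$ and $C^{j}$ have common neighbours in $H$, producing intermediate basic cycles of $H$ that realize the chaining. Viewed inside $G$, this is a chain of basic cycles of $G$, so by Corollary~(ii) all basic cycles of $H$ belong to one and the same maximal borough $B$ of $G$. Since every edge of $H$ is basic and sits on such a cycle, $H \subseteq B$ as an induced subgraph.

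To finish, I would note that $H$, being a maximal induced diameter-$2$ subgraph of the larger graph $G \supseteq B$, is a fortiori maximal as an induced diameter-$2$ subgraph of $B$, i.e.\ $H$ is a \emph{2}-club of $B$. For uniqueness, if $H$ were also a \emph{2}-club of some borough $B'$, every (basic) edge of $H$ would be a basic edge of both $B$ and $B'$; Corollary~(iii) then forces $B' = B$.

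The main obstacle I anticipate is precisely the second step: one must be sure that the edge chaining of the basic cycles of $H$ can be taken \emph{inside} $H$, rather than wandering through vertices of $G \setminus V(H)$, since otherwise Corollary~(ii) would not transport the chain to a single borough of $G$. This is exactly where nonseparability and diameter $2$ of $H$ are used; once this point is nailed down, the rest of the argument is bookkeeping with the partitioning Corollaries (ii) and (iii).
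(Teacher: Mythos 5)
Your proposal is correct and follows essentially the same route as the paper: the paper's own justification is the brief remark after the corollary, namely that a nonseparable \emph{2}-club is itself a borough (Proposition~\ref{thm:2c}, i.e.\ your edge-chaining of its basic cycles inside $H$), and that membership in two distinct maximal boroughs is impossible because a shared edge would let them merge, which is exactly the content of the partitioning statements (ii)--(iii) you invoke. Your explicit check that maximality of $H$ in $G$ transfers to maximality in the borough $B$ is a detail the paper leaves implicit, but it is not a different argument.
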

 Note that a nonseparable \textit{2}-club of a graph {G} is either itself a borough of {G}, or part of just one borough of  {G}. If it would have been part of two boroughs of {G} these would share a common edge and could be merged into a larger borough, which contradicts their required maximality.\\   
However, the reverse of Corollary \ref{ns-2C_G-B} is true only for social circles and hamlets and not for nonseparable coteries, conform the following theorem: 

\begin{theorem}\label{th:HSC_B_G}

Let \textit{B} denote a borough of \textit{G}.\
\item \textit{(i)} A subgraph of G is a hamlet or social circle of G if and only if it is a hamlet or social circle of the corresponding borough B of G.\
\item \textit{(ii)} A coterie of a borough \textit{B} of \textit{G} is either itself a coterie of \textit{G} or included in a larger coterie of \textit{G}.
\end{theorem}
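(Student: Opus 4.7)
The plan is to derive both parts from two facts established earlier: each basic edge of $G$ belongs to exactly one borough of $G$ (item (iii) of the preceding Corollary on properties of boroughs), and the type of a \textit{2}-club (coterie, social circle, or hamlet) is determined by its intrinsic s.s.t.\ diameter, so the type is preserved whether the subgraph is regarded as sitting inside $G$ or inside $B$.

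For the forward direction of (i), if $H$ is a hamlet or social circle of $G$ then $H$ is nonseparable (Theorem \ref{thm:types2c}(ii)), so Corollary \ref{ns-2C_G-B} provides a unique borough $B$ in which $H$ is a \textit{2}-club, and by type preservation $H$ is a hamlet or social circle of $B$. For the reverse direction, let $H$ be a hamlet or social circle of $B$ and suppose for contradiction that $H \subsetneq H'$ for some \textit{2}-club $H'$ of $G$. If $H'$ is nonseparable, Proposition \ref{thm:2c} makes $H'$ a borough contained in some maximal borough $B'$; the edges of $H$ are basic in $G$ (by Theorem \ref{thm:k-edges}) and sit inside both $B$ and $B'$, so the preceding Corollary (iii) forces $B = B'$, whence $H' \subseteq B$, contradicting maximality of $H$ in $B$. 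If $H'$ is separable then it is a coterie with cutpoint $u_0$; the case $u_0 \in V(H)$ makes $u_0$ a central node of $H$ and turns $H$ into a coterie, contradicting its type. In the case $u_0 \notin V(H)$, I would look at the induced subgraph on $V(H) \cup \{u_0\}$: because $H$ is nonseparable, every $v \in V(H)$ has a neighbor $w \in V(H)$, so the triangle $(u_0, v, w)$ lies in $G$ and shares the basic edge $(v,w)$ with $B$; the preceding Corollary (iii) then places the triangle (and hence $u_0$) in $B$, and $H \cup \{u_0\} \subseteq B$ is a \textit{2}-club of $B$ strictly larger than $H$, again a contradiction. Thus $H$ is a \textit{2}-club of $G$ of the same type.

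For (ii), if $H$ is already a \textit{2}-club of $G$ then type preservation makes it a coterie of $G$. Otherwise choose a \textit{2}-club $H' \supsetneq H$ of $G$, and I will show that $H'$ must be a coterie. Assume $H'$ is nonseparable. By Proposition \ref{thm:2c} $H'$ is a borough and by Theorem \ref{thm:k-edges} every edge of $H'$ lies on a basic cycle contained in $H'$. Pick any edge $e$ of $H$ and a basic cycle $C' \subseteq H'$ through $e$; since $e$ is basic and already contained in $B$, the preceding Corollary (iii) forces $C' \subseteq B$. The basic cycles of the borough $H'$ are edge-chained in $G$, and each step of the chain is glued along an edge which, by item (iii), belongs to a unique borough, so iterating places every basic cycle of $H'$ into $B$. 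Because $H'$ is nonseparable every edge of $H'$ is covered, giving $H' \subseteq B$ and contradicting the maximality of $H$ in $B$. Hence $H'$ is separable, and by Theorem \ref{thm:types2c}(ii) it is a coterie of $G$, so $H$ is contained in a strictly larger coterie of $G$.

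The main obstacle I anticipate is the subcase of (i) in which $H'$ is a separable coterie with $u_0 \notin V(H)$: the direct ``shared basic cycle'' argument is not obviously available because $u_0$ is not a priori in $V(B)$. The triangle construction above, which uses the minimum-degree $\geq 2$ property of a nonseparable graph together with the uniqueness of the borough containing any basic edge, supplies the missing link. The analogous rerouting through basic edges rather than basic cycles of $H$ is precisely what makes the proof of (ii) go through in the extreme case where $H$ is acyclic, e.g.\ a star $K_{1,k}$.
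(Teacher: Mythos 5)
Your proof is correct and follows essentially the same route as the paper's: the forward direction of (i) from Corollary \ref{ns-2C_G-B}, and the converse by contradiction, splitting on whether the enclosing maximal \textit{2}-club of $G$ is nonseparable (which forces it inside $B$, violating maximality) or a separable coterie with a central node. You in fact strengthen the paper at its tersest point: where the paper merely asserts in the separable case that the central node ``must also be a node of $B$'', your triangle-through-a-basic-edge argument combined with the uniqueness of the borough containing each basic cycle actually proves it, and your part (ii) is argued directly via basic-cycle chaining rather than by appeal to part (i), arriving at the same conclusion (the enclosing coterie of $G$ must be separable) that the paper states in its closing remark.
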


\begin{proof}
\textit{(i): If}: let $ B $ be a borough of $ G $ and assume that a hamlet (or social circle) of $ B $ is not a \textit{2}-club of $ G $. Then, as it is a \textit{2}-club in $ G $, it must be contained in a larger \textit{2}-club of $ G $.\\
\textit{(a)}: that \textit{2}-club cannot be a nonseparable \textit{2}-club of $ G $, because then $ B $ would not be maximal in $ G $;\\
\textit{(b)}: if that \textit{2}-club is separable, then it must be a coterie of $ G  $ with a unique central node, adjacent to all the other nodes of the \textit{2}-club (see section 3.1). But then that node must also be a node of $  B $ and the hamlet or social circle would be a coterie of $ B $ instead, contrary to assumption.\

\textit{(i): Only if}: follows directly from Corollary \ref{ns-2C_G-B}.\\

\textit{(ii)} If a coterie of a borough \textit{B} of \textit{G} is not also a coterie of \textit{G}, then it must be included in a larger 2-club of \textit{G}. That must be a coterie of \textit{G}, as by \textit{(i)} it cannot be a hamlet or social circle of \textit{G}, because then it would be included in the same social circle or hamlet of \textit{B} as well, contradicting its maximality in \textit{B}.

Thus a non-separable coterie of \textit{B} can be included in a larger, separable coterie of \textit{G} and therefore, though maximal in \textit{B} and sharing the central node, is not a coterie of {G}.\end{proof}\\

A fictitious and elementary illustration is given with Figure \ref{fig:Graph_2-Boroughs}(a) which gives an example of a simple graph \textit{G} with 29 nodes.\\ 
Considering only \textit{2}-clubs of at least three points and three lines, a straightforward count of the \textit{2}-clubs of the simple graph \textit{G} of Figure \ref{fig:Graph_2-Boroughs}(a) results in one hamlet, 5 social circles and 13 coteries of \textit{G}.\ 
The hamlet is the pentagon (C5) formed by nodes \{\textit{13,14,16,17,18}\}. The social circles are identified by:

\begin{enumerate}
\item central pairs: (\textit{w,1}), (\textit{w,6}); size: 6;
\item central pairs: (\textit{6,7}), (\textit{6,v}), (\textit{7,u}); size: 5;
\item central pairs: (\textit{20,19}), (\textit{20,17}), (\textit{19,18}); size: 5;
\item central pairs: (\textit{12,15}), (\textit{15,14}), (\textit{12,13}); size 5;
\item central pairs: (\textit{12,11}), (\textit{11,10}), (\textit{12, v}); size: 5
\end{enumerate}

Graph \textit{G} has 13 coteries, of which one coterie is the nonseparable ego-network of node \textit{3}, which as a 2-club is maximal.\

The other 12 coteries of \textit{G} are the separable ego-networks of the nodes: \textit{1, 7, 6, u, v, 17, 18, 13, 14, 12, 21} and \textit{24}.\

Note that the ego-network of node \textit{w} is nonseparable but not a 2-club of \textit{G}, because it is included in the social circle \textit{1} of \textit{G}.\\

\label{key}\begin{figure}[htbp]
\centering
\begin{subfigure}[t]{0.45\textwidth}
\includegraphics[width=\textwidth]{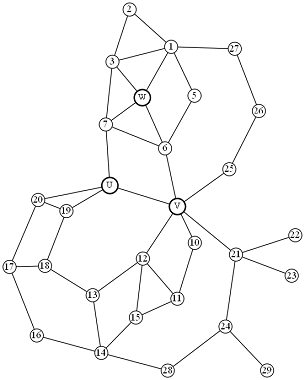}
\caption{A graph G.}
\end{subfigure}
~
\begin{subfigure}[t]{0.45\textwidth}
\includegraphics[width=\textwidth]{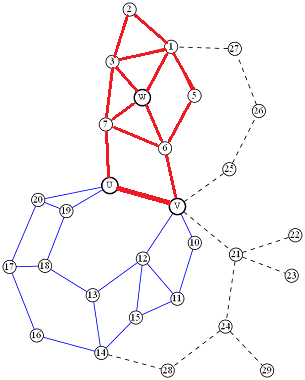}
\caption{Graph G: two boroughs and outback}
\end{subfigure}
\caption{\label{fig:Graph_2-Boroughs} Simple graph \textit{G} (29 nodes): two boroughs with two touch points and outback.\\
(Edges boroughs red-bold and blue-solid; outback black-dashed).}
\end{figure}

As shown in Figure \ref{fig:Graph_2-Boroughs}(b) graph \textit{G} has two boroughs: one indicated with red-bold edges and the other with blue-solid edges. Its \textit{outback} is given with black-dashed edges.\

If we consider only the close community area covered by these two boroughs of \textit{G}, we note that all non-separable 2-clubs of \textit{G} are contained by the two boroughs:
\begin{itemize}
\item the top red-bold borough has two social circles, which are the social circles 1 and 2 of \textit{G}, and its nonseparable coterie of node \textit{3};
\item the bottom blue-solid borough has three social circles: the social circles 3, 4 and 5 of \textit{G}, as well as its hamlet.
\end{itemize}

Boroughs of a graph G can have one or more common points, to be called \textit{touch points}, as illustrated in Figure \ref{fig:Graph_2-Boroughs}(b) by the neighbour nodes (\textit{u,v}). The extra bold red edge \textit{(u,v)} belongs to the top red-bold borough only, as it is part of its basic cycle formed by edges \textit{u-v-6-7-u}. Its other cycle \textit{u-v-12-13-18-19,u} is a hole of \textit{G} of 6 nodes and edges, and not a basic cycle of the lower blue-solid borough. Would that hole have been a basic cycle instead, then the two boroughs would have been edge chained and, due to the required maximality for boroughs of \textit{G} (see definition \ref*{Def:Borgh}), form a single borough of \textit{G}. Thus edge (\textit{u,v}) is a basic edge for the top red-bold borough only.\\
More general: if touch points of a graph \textit{G} are on common basic cycles of \textit{G}, then, due to maximality of boroughs of \textit{G}, all these basic cycles belong to the same borough of \textit{G}.\\

Referring to the particular nature of ego-networks as coteries (see conclusion Subsection 3.2 ), we see in Figure \ref*{fig:Graph_2-Boroughs} that the separable coteries of \textit{G} are not fully covered by its boroughs. For instance, the ego-network of touch points between boroughs, or between boroughs and outback of a graph, are separable and therefore coteries of that graph. This is illustrated above for the ego-networks of nodes \textit{u} and \textit{v}. Though both are coteries of \textit{G}, they are dissolved as such, because their edges are distributed over the two boroughs and, for node \textit{v}, the outback of \textit{G}. Again, the ego-networks of nodes \textit{1} and \textit{14}, reduced by missing outback edges, are also (separable) coteries of their boroughs of \textit{G}, but not of graph \textit{G} itself, because they are included in the corresponding unreduced (separable) coterie of \textit{G}.\

Moreover, the separable coteries in the outback of a graph, mainly stars, such as the ego-networks of nodes \textit{21} and \textit{24} in Figure \ref*{fig:Graph_2-Boroughs} \textit{(b)}, will be ignored by a focus on the boroughs only.\\

However: all of the coteries of the boroughs are either also coteries of \textit{G}, or included in a separable coterie of \textit{G} sharing its ego node: 

- for the red-bold borough: its three separable coteries are either also coteries of \textit{G} (see nodes \textit{6} and \textit{7}) or included in a coterie of \textit{G} \textit{i.e.} that of node \textit{1};

- for the blue-solid borough: its  5 separable coteries are also coteries of \textit{G}: those of nodes \textit{17, 18, 13} and \textit{12} or included in a coterie of \textit{G} with the same ego node: \textit{e.g.} node \textit{14}, and its only nonseparable coterie of node \textit{3} coincides with its nonseparable coterie of \textit{G}.\\

Consequently a graph or network \textit{G} can be partitioned into its boroughs and its outback, where its basic edges and their basic cycles induce the borough structure of \textit{G}, which contains its areas of \textit{close communication} and its \textit{2}-clubs as close communities, while its non-basic edges determine its \textit{outback} of more remote communication.
So, when the main focus of the analysis of a graph or network is on its close community structure, one might as well ignore its outback part and focus on its boroughs and the \textit{2}-clubs contained by them.\\

Lastly, it is well known that removal of an edge from a nonseparable graph
can increase its diameter. The next corollary shows that for boroughs this increase is limited to at most three.

\begin{corollary}
Let \textit{B} be a borough of \textit{G} and let $ B-(u,v) $, denote the subgraph obtained by removing an edge $ (u,v) $ from \textit{B}, then
\begin{equation*}
dm(B)\leq dm(B-(u,v)) \leq dm(B)+3. 
\end{equation*}

\end{corollary}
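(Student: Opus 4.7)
The plan is to prove the two inequalities separately. The left inequality $dm(B) \leq dm(B-(u,v))$ is immediate, since removing an edge from a graph can only fail to preserve or strictly increase distances between pairs of nodes; therefore the maximum pairwise distance can only weakly increase.

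For the right inequality, the key observation is that by Corollary (iii) above, every edge of a borough $B$ of $G$ is a basic edge, hence lies on a basic cycle $C_{l}$ with $l\in\{3,4,5\}$, and moreover that entire basic cycle is contained in $B$. So first I would pick, for the removed edge $(u,v)$, one such basic cycle $C_{l}\subseteq B$. Removing $(u,v)$ from $C_{l}$ leaves the complementary arc, a $u$--$v$ path of length $l-1 \in \{2,3,4\}$, all of whose edges remain in $B-(u,v)$. Hence $d_{B-(u,v)}(u,v)\leq l-1 \leq 4$.

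Next I would use this local replacement to bound all pairwise distances. Fix any two nodes $x,y\in V(B)$ and a shortest $x$--$y$ path $P$ in $B$, of length $d_{B}(x,y)$. If $P$ does not use the edge $(u,v)$, then $P$ still lies in $B-(u,v)$ and so $d_{B-(u,v)}(x,y)=d_{B}(x,y)$. If $P$ does use $(u,v)$, I replace that single edge along $P$ by the complementary arc of $C_{l}$ described above, obtaining an $x$--$y$ walk in $B-(u,v)$ of length at most $d_{B}(x,y)-1+(l-1)\leq d_{B}(x,y)+3$. Taking the maximum over all pairs $x,y$ yields $dm(B-(u,v))\leq dm(B)+3$, completing the proof.

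There is really no hard obstacle here; the only substantive step is recognizing that the detour guaranteed by the basic cycle is entirely inside $B-(u,v)$, which is exactly what the borough definition and Corollary (iii) give us. The bound $+3$ is tight precisely when the removed edge lies on a basic pentagon $C_{5}$ but on no shorter basic cycle of $B$, so no sharper constant can be extracted from this argument.
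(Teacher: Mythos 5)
Your proof is correct and takes essentially the same route as the paper: the left inequality is the trivial monotonicity of distances under edge deletion, and the right inequality comes from rerouting any shortest path that used $(u,v)$ along the complementary arc of a basic cycle $C_{l}$, $l\in\{3,4,5\}$, contained in $B$, giving an increase of at most $l-2\leq 3$ (you merely spell out the per-pair replacement step that the paper leaves implicit). One small caveat on your closing remark: lying only on a basic $C_{5}$ is a \emph{necessary} condition for the increase to equal $3$ (as the paper notes), but not a sufficient one, since other detours in $B$ may keep the diameter from growing that much.
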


\begin{proof}
Consider the set of all shortest paths containing $ (u,v) $ defining distances between pairs of nodes of \textit{B}. Note that such pairs can also be joined by alternative shortest paths in   \textit{B} not containing \textit{(u,v)}.\\ 
As \textit{(u,v)} is on a basic cycle, its removal extends the distance between the nodes $ u $ and $ v $ by 1, 2, or 3 along the remaining part of the basic cycle(s) on \textit{(u,v)}. Thus, all distances between pairs of nodes of the set, and the diameter of $ B-(u,v) $, increase by at most 3.
\end{proof}\\
An increase of the diameter by exactly 3 implies that the removed edge is on a basic $ C_{5} $ of a hamlet of \textit{B}, as illustrated by Figure \ref{fig:min3} for the removal of the bold-lined edge.\\

\begin{figure}
\centering
\begin{tikzpicture}[scale=0.6]
\tikzstyle{every state}=[draw,circle,fill=white,minimum size=4pt,
                            inner sep=0pt]
\node[state] (a) at (0,0) {};
\node[state] (b) at (2,0) {};
\node[state] (c) at (4,0) {};
\node[state] (d) at (6,0) {};
\node[state] (e) at (0,2) {};
\node[state] (f) at (2,2) {};
\node[state] (g) at (4,2) {};
\node[state] (h) at (6,2) {};
\node[state] (i) at (3,3) {};

\path (a) edge node {} (b)
          edge node {} (e)

      (f) edge node {} (b)
          edge node {} (e)
          edge node {} (i)

      (g) edge node {} (c)
          edge node {} (h)
          edge node {} (i)

      (d) edge node {} (c)
          edge node {} (h)

      (c) edge node {} (h)

      (b) edge node {} (e)
          edge[line width = 2pt] node {} (c);
\end{tikzpicture}
\caption{Borough with diameter 3. After removing the bold edge, the remaining graph has diameter 6.}
\label{fig:min3}
\end{figure}
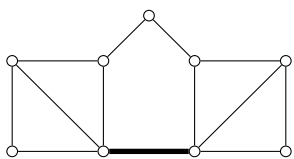
 
 \textit{Summary}. We conclude that the set $\mathcal{B}\left(G\right)$ of boroughs of $ G $ contains the proper \emph{2}-clubs of $G$, as distributed across and within its boroughs.
Thus, where we defined \emph{2-clubs} as the basic type of \emph{close
community} in a network, we can see the \emph{boroughs} to which they  belong as a
\emph{supercommunity} in that network, enveloping chained sets of such close communities.\

Moreover, conform footnote~\ref{not:k-Bor} these concepts and results can be extended to the general case of diameter $ k $. Corresponding $k-  $clubs (diameter at most $ k $) and $ k $-boroughs are then both formed from the basic set  $  C_{\left[3,  2k+1\right]  } = $ $ \left\lbrace C_{3},...,C_{2k+1}\right\rbrace  $ of basic cycles with diameter $ 3 $ to $ k $.

For instance, since the early days of social network analysis (SNA) triangles and triad censuses have been of central importance in the analysis of
social networks (\emph{e.g.} \citep{Holland1970,Holland1971,Davis1972,Johnsen1985,Frank1988,Watts1998}). Such \lq very local structures\rq, \citep{Faust2007} of direct communication between neighbours in triads, correspond to \textit{1-clubs} and \textit{1-boroughs}, as  formed by triangles only (\emph{e.g. }$ 3 $-cliques $\left(
K_{3}\right)  $ as in \citep{Palla2005}). It is not difficult to
see that the $1$-clubs and $1$-boroughs of a network are nested in the $2$-clubs and ($2$)-boroughs which are the subject of this paper.
\\
In other, \textit{e.g.} topological, contexts $ 3 $-clubs (at most diameter 3) and corresponding $ 3 $-boroughs will require for their formation the smallest 3-clubs hexagon ($ C_{6} $) and heptagon ($ C_{7} $) in the corresponding basic set $\left\lbrace  C_{3},..,C_{7}\right\rbrace  $. A rather special case is that of a 'football' type of graph, a single borough, consisting of pentagons and hexagons only, so that all its $ 3 $-clubs are formed by just two types from the basic set: the pentagon $\left(  C_{5}\right)  $ and hexagon $ \left( C_{6}\right)  $.

\section{Some applications}

With the introduction of $k$-clubs (\cite{Mokken1979}) it was pointed out that in practice their search, detection, and identification in other than small networks would be a hard, if not prohibitive, computing task, at the time beyond available hardware and algorithmic capabilities, such as the clique algorithm of Bron and Kerbosch (\cite{Bron1973}).\

Later results in computational complexity theory demonstrated that for $ k\geq2 $ several variants of $ k $-club detection were NP-hard (\textit{e.g}. \cite{Bourjolly2002}; \cite{Balasundaram2005}; ), such as, for instance, finding a $ k $-club of size larger than $\Delta(G)+1 $ (\cite{Butenko2007}), or more generally, for a given \textit{k}-club, finding a larger \textit{k}-club containing it (\cite{Pajouh2012}).

\subsection{Finding boroughs and \textit{2}-clubs}

Despite these theoretical limits, in the last decade resources and algorithm theory have made significant progress toward workaround, heuristic and practical detection algorithms to detect\space $k$-clubs \space(\citep{Bourjolly2000,Bourjolly2002,Pasupuleti2008,Asahiro2010,Yang2010,Carvalho2011,Schafer2012,Chang2012,Veremyev2012,Hartung2012,Hartung2013,Pajouh2012,Shahinpour2012}). Most of these algorithms find either $ k $-clubs of at least a given minimum size in a given graph G, or the largest (in number of nodes) $ k $-club of G.\medskip

These sources inspired us to develop some specific software modules enabling us to detect both boroughs and $ 2 $-clubs in a simple graph.
From a perspective of community detection and network analysis it is more interesting to find (inclusion-wise maximal) $ 2 $-clubs than just the largest $ 2 $-clubs
(\textit{i.e.} maximum cardinality). Hence our approach of detecting \textit{2}-clubs in a graph was designed to achieve or approximate that purpose within the limits of available computational capabilities.\

In doing so we made use of the crucial intermediate position of the boroughs, as separate components of a graph or network, hosting its edge-chained $ 2 $-clubs: its hamlets, social circles and (part of) its coteries (see Theorem \ref*{th:HSC_B_G}). This suggested a two step approach to finding all \textit{2}-clubs: first find all boroughs, then find all \textit{2}-clubs inside boroughs,  taking into account the proviso at the end of Theorem \ref*{th:HSC_B_G} concerning the special nature of coteries of a network and of its boroughs.\

We thus developed algorithms, conform to Definition \ref*{Def:Borgh}, to detect all boroughs of a graph by joining and chaining cycles from its set of basic cycles (triangles, squares and pentagons), using available methods of finding all cycles (e.g. \cite{Tiernan1970,Weinblatt1972,Fosdick1973,Johnson1975} or, more specifically, of finding only cycles of given length (\cite{Alon1994,Yuster2011}).\medskip

Another set of algorithms was developed for finding all \textit{2}-clubs of a graph, sorted by type (coterie, social circle or hamlet). 

Thus we could also detect the \textit{2}-clubs in separate selected (e.g. the largest) boroughs of a network.

The (usually many and overlapping) \textit{2}-clubs that are found are stored in a database, per borough classified according to the three possible types/level of close communication (coteries, social circles, hamlets)  and per type sorted according to size. They can then be inspected and analyzed by a Viewer interface. Details are given in \blind{\cite{Laan2012}} and in the available open source licensed package by \blind{\cite{Laan2014}}.

\subsection{Some real network results}

In this section we illustrate the concepts introduced above with some datasets, chosen to cover different data domains as well as to provide some analytic perspectives. The different data domains and associated network data are:
\begin{itemize}
\item the well-known small network of \textit{Zachary's karate club};
\item \emph{corporate board networks} as given by the interlocking directorate networks for the top 300 European firms for the year 2010;
\item \emph{co-authorship data} taken from  the large DBLP dataset. 
\end{itemize}

The Zachary data will illustrate the perspectives of \textit{2}-club analysis at the micro scale of small face-to-face networks.\

The European corporate network concerns a much larger network and the ensuing multitude of \textit{2}-clubs thus changes the analytic perspectives.

Finally we investigated the distribution of boroughs and their sizes in much larger datasets, as illustrated for the DBLP co-authorship data set. 

\subsubsection{Zachary's karate club}

Zachary's (\cite{Zachary1977}) well known data set concerned a voluntary
association, a university student karate club, with a total membership
of about 60 persons/nodes. Zachary analyzed a valued network, where edges
denoted the observed number of 8 types of mutual interaction outside
karate lessons. He restricted his analysis to the main component of
34 interacting members, thus disregarding 26 other non-connected or
isolated members. After conflicting views two factions polarized around
two main opponents - node 1 (labeled 'Mr. Hi', the karate teacher)
and node 34 ('John A.', president and main officer) - and subsequently
split accordingly. Zachary predicted the composition of the splits
using a max-flow-min-cut algorithm. 

For this paper we reanalyzed his data in the form of a simple undirected graph with an edge indicating at least one of the 8 types of interaction.
First we used a standard SNA package (\cite{Borgatti2002}) and then applied our borough and \textit{2}-club detection algorithms to the
relevant components. This simple connected network of 34 nodes has
diameter 5 and consists of two bicomponents (size 27 and 7), separated
by a common cut point (node 1: Mr. Hi), and a pendant (node 12) attached
to node 1 (Mr. Hi) as well. Both bicomponents prove to be boroughs
and node 1 (Mr. Hi) is a member, and touch point, of each of these. 
Thus the ego-network of cutpoint Mr. Hi is a coterie in the larger graph and distributed over the two boroughs and the pendant node 12.
Given the face-to-face nature of this (subset of) a student association, it is not surprising that all its edges, except the pendant (1,12), are part of at least one \textit{2}-club in just one of the two boroughs.\\
The smallest borough of size 6 contains, in addition to touch node 1 (Mr. Hi), nodes 5, 6, 7, 11, and 17, which were not further considered by Zachary. It has diameter 2 and thus is a \textit{2}-club (a social circle) and a trivial borough as such.\\
The second, largest, borough of size 27 corresponds to the network analyzed in Zachary's paper. This borough has diameter 4 and contains 13 \textit{2}-clubs including 4 coteries (all separable),
8 social circles, and one hamlet. They are listed in Table~\ref{tab:kc}. 

The two opposing leaders, Mr. Hi (node 1) and John A. (node 34), are
both part of just three \textit{2}-clubs:
\begin{itemize}
\item the 7-node ego-network (coterie) of node 32;
\item a social circle of size 14 (with central pair 34-14);  and 
\item the hamlet of size 8.
\end{itemize}
The latter two \textit{2}-clubs are depicted in Figure~\ref{fig:HiClubs}.\\

Moreover, each of the two opposing nodes (1 or 34) are part of five \textit{2}-clubs
excluding the other opponent. Hence, all \textit{ 2}-clubs contain at least
one of the two opponents: node 1 or node 34. 
In particular the 14 node social circle and 8 node hamlet look like
negotiation forums of the two opposing sides. For instance, the hamlet
of 8 nodes connects the central egos (nodes 1, 34, 3, and 32) of
the 4 coteries. 

\begin{table}
\begin{footnotesize}
\begin{tabular}{lrl}
\hline
\textbf{Type}& \textbf{Size} & \textbf{List of Members} \\ \hline
Coterie&7&[\textbf{1}, 25, 26, 29, 32, 33, \textbf{34}]
\\Social circle&8&[24, 26, 28, 29, 30, 32, 33, \textbf{34}]
\\Social circle&8&[3, 24, 25, 28, 29, 32, 33, \textbf{34}]
\\Social circle&8&[24, 25, 26, 28, 29, 32, 33, \textbf{34}]
\\Hamlet&8&[\textbf{1}, 3, 25, 28, 29, 32, 33, \textbf{34}]
\\Social circle&10&[\textbf{1}, 2, 3, 4, 8, 9, 14, 29, 32, 33]
\\Social circle&10&[\textbf{1}, 2, 3, 4, 8, 9, 14, 31, 32, 33]
\\Coterie&11&[\textbf{1}, 2, 3, 4, 8, 9, 10, 14, 28, 29, 33]
\\Social circle&11&[\textbf{1}, 2, 3, 4, 8, 9, 14, 18, 20, 22, 31]
\\Coterie&12&[\textbf{1}, 2, 3, 4, 8, 9, 13, 14, 18, 20, 22, 32]
\\Social circle&14&[\textbf{1}, 2, 3, 4, 9, 10, 14, 20, 28, 29, 31, 32, 33, \textbf{34}]
\\Social circle&17&[3, 9, 10, 14, 15, 16, 19, 21, 23, 24, 28, 29, 30, 31, 32, 33, \textbf{34}]
\\Coterie&18&[9, 10, 14, 15, 16, 19, 20, 21, 23, 24, 27, 28, 29, 30, 31, 32, 33, \textbf{34}]
\\\hline
\end{tabular}
\end{footnotesize}
\caption{\label{tab:kc} List of all  2-Clubs in the large 27-node Borough   of  Zachary's Karate Club.
For each two club, the type, size and the members are given. The leaders of the two parts after the split (1 and 34) are indicated in boldface.} 
\end{table}

\begin{figure}[htbp]
\centering
\begin{subfigure}[t]{0.45\textwidth}
\includegraphics[width=\textwidth]{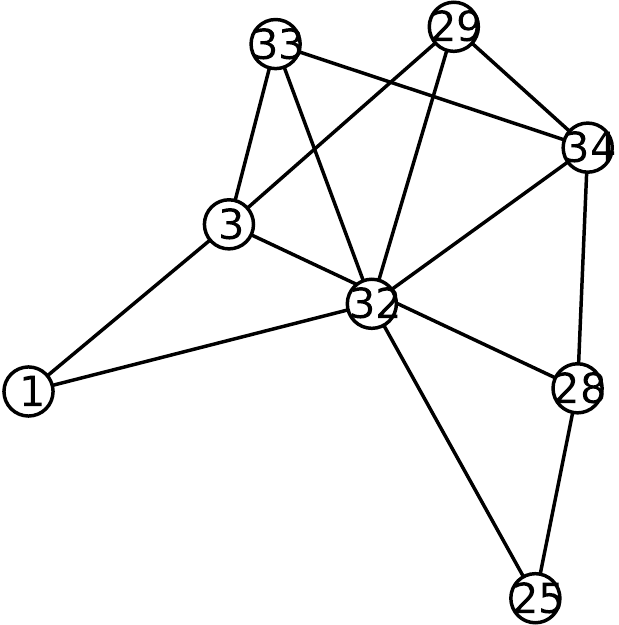}
\caption{The Hamlet.}
\end{subfigure}
~
\begin{subfigure}[t]{0.45\textwidth}
\includegraphics[width=\textwidth]{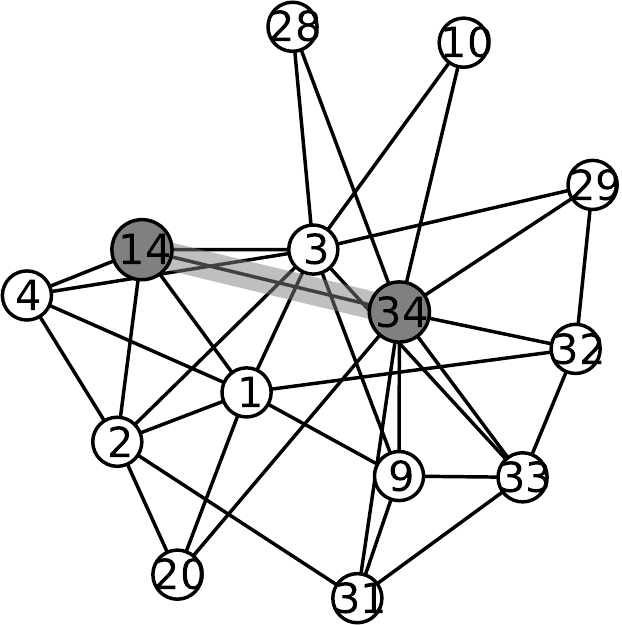}
\caption{The Social circle. The central pair is highlighted.}
\end{subfigure}
\caption{Two of the three 2-Clubs containing both John and Mr. Hi.}
%
\label{fig:HiClubs}
\end{figure}
 
These four coteries represent the ego-networks of the two opposing nodes 1 (Mr Hi: size 12) and 34 (John 
a.: size 18), and the nodes 32 (size 7) and 3 (size 11), where node 3 appears to be a supporting 'lieutenant' node for Mr. Hi (node 1) and node 32 for his opponent John A. In terms of their \textit{2}-club memberships both nodes show extensive liaison connections with the opposing side.\\

Membership of particular \textit{2}-clubs appears to be a good predictor for
faction membership after the split. To keep within the bounds of this
paper, we can illustrate this with the problematic mysterious node
9, the only node mentioned explicitly by Zachary in his paper, apart
from Mr. Hi (1) and John A. (34). Node 9 was problematic in the sense
that he was classed as a (mild) supporter of the side of 34 (John A.), but in the end showed up as a member of the opposing faction of Mr. Hi after the split. However, this move can be understood by an analysis of his \textit{2}-club memberships.\\
Node 9 was member of eight \textit{2}-clubs, each of which included at least
one of the two opponents 1 or 34, and distributed as follows:

\begin{itemize}
\item five \textit{2}-clubs with only node 1 (Mr. Hi);
\item two \textit{2}-clubs with only node 34 (John A.); and 
\item one \textit{2}-club with node 1 and node 34.
\end{itemize}
Moreover, in 7 of its eight \textit{2}-clubs node 9 is accompanied by node
3, its neighbor and firm Mr. Hi supporter, as we noted above.

So, on the basis of its \textit{2}-club affiliations alone, one would have
predicted node 9 to move (or stay) with the faction of Mr. Hi after
the split, as in fact he did. The \textit{2}-club analysis also revealed liaison roles of
two nodes, not mentioned as such in the Zachary paper: node 3 for
Mr. Hi (node 1) and node 32 for John A. (node 34).

\subsubsection{European corporate network 2010}

This network was constructed from the interlocking directorates of the largest 286 stock listed  companies, as studied by Heemskerk \citep{Heemskerk2013}.\footnote{The  European data for 2010 were kindly made available to us by Eelke Heemskerk.} 
Its nodes designate the boards of individual companies and its edges indicate that the companies they connect share at least one common director in their boards. Hence the network provides channels of interpersonal contact and communication between companies at the level of their boards.\\

The source network for these 286 companies had one giant component of 259, which we chose for further analysis.
Apart from three trivial 'boroughs' of sizes 4, 3 and 3, we found one single giant borough of 225 companies.\\

This single borough, covering 87\% of the dominant component and 79\% of all firms, formed a compact sub-network in the corporate European network of 2010, consisting of 2128 overlapping or edge-chained \textit{2}-clubs of corporations, with a median \textit{2}-club size of 10 corporations in a size range of 4-27. This result confirms Heemskerk's (2013) original conclusion that by 2010 this European network appeared to be well integrated. However, with a  diameter of 7 the borough was rather stretched.\\

\begin{table}[htbp]
\begin{tabular}{lllll}
\hline\\
\textit{European borough 2010} & \textit{Coterie} & \textit{Soc. Circ.} & \textit{Hamlet} & \textit{Total}  \\\hline
 $ $ $ $ \\ 
All \textit{}2-Clubs Borough & 138 & 717 & 1273 & 2128 \\ 
\% of  total \textit{2}-Clubs Borough & 6.5\% & 33.7\% & 59.8\% & 100\% \\ 
Size range  & 4-27 & 5-25 & 5-24 & 4-27 \\ 
Median size & 10 & 14 & 16 & 15 \\
Coverage nodes borough $ ^{*)} $ & 99.6\% & 89.4\% & 92.5\% & 100\% \\
 &  &  &  &  \\
\textit{Compagnie Nationale \`{a} Portfeuille SA} & 15 & 284 & 691 & 990 \\
\% of type \textit{2}-club borough &10.9\% & 39.6\% & 54.3\% & 46.5\% \\  
\textit{BNP Paribas SA} & 12 & 105 & 350 & 467 \\
\% of type \textit{2}-club borough &8.7\% & 14.6\% & 27.5\% & 21.9\% \\
 &  &  &  &  \\
\textit{Cie Nat. \`{a} Portefeuille} or \textit{BNP Paribas} & 25 & 332 & 752 & 1109 \\ 
\% of  total \textit{2}-Clubs Borough & 18.1\% & 46.3\% & 59.1\% & 52.1\% \\ 
Common 2-Clubs & 2 & 57 & 289 & 348 \\ 
\% \textit{Cie Nat. \`{a} Portefeuille} of \textit{BNP Paribas} & 16.7\% & 54.3\% & 82.6\% & 74.5\% \\
 &  &  &  &\\
$ ^{*)} $ {\footnotesize Coverage: \% of nodes of borough (100\% = 225)}  &  &  &  &\\
\hline 
\end{tabular}
\caption{European corporate borough 2010: 2-clubs of borough and two of its major companies compared.}
\label{Bor2Comps}
\end{table}

The multitude of \textit{2}-clubs, to be expected for large networks, can be analyzed by means of views and selections from their database. \textit{Table} \ref{Bor2Comps} shows some results.\\

The first \textit{upper} part of \textit{Table} \ref{Bor2Comps} gives the distribution of these 2128 \textit{2}-clubs of the European borough over the three types and levels of close communication.\\

The \textit{first}, most local level of communication, the \textit{coteries}, formed 6.5\% of the \textit{2}-clubs of the borough. They were the ego-networks of 138 central companies: 62\% of the 225 companies in the borough. Together the nodes of these coteries covered practically all (coverage: 99.9\%) of the companies (nodes) of the borough.\

The ego-networks of the other 85 companies were not coteries of the borough but were included in or split over larger 2-clubs.  That was, for example, the case for the German automotive company \textit{Volkswagen AG} and two large banks: the German \textit{Deutsche Bank AG} and the Spanish \textit{Banco Santander SA}.\

Thus this set of 138 coteries formed the local backbone of the borough, as the two next levels of more extended close communication, social circles and hamlets, are formed from parts of the ego-networks of their central companies. 
Their composition strongly suggests a predominance of French \textit{2}-clubs in the borough: the 20 largest coteries consist of the ego-networks of 12 French, 3 German, 2 British firms, and a Swedish, a Belgian and a Swiss company.\\

The \textit{second} intermediate level of\textit{ social circles} was formed by one third (717: 33.7\%) of the \textit{2}-clubs of the borough, with a median size of 14 companies in a size range of 5-25. Together the social circles cover 89.4\% of all nodes of the borough. \\
The composition of the largest social circles again confirms the predominance of the largest French companies in the network. They were formed around one or more central pairs of major French companies (\textit{i.e.} pairs of central neighbors adjacent to all others in the \textit{2}-club).\

In \textit{Figure} \ref{fig:SCTotSuezCieNat} a detail is given of one large social circle of 25 companies, mainly French, with 19 French and one Franco-Belgian, 2 British, 2 Dutch firms and a Luxembourg company. It shows its densest part around its two central pairs, one formed by the French company \textit{Total SA} with the French firm \textit{GDF Suez SA} and the other by \textit{Total SA} and the Belgian company \textit{Compagnie Nationale \`{a} Portefeuille SA}.\\

The \textit{third} and widest level of close communication, the \textit{hamlets}, occupied a major part (1273: 59.8\%) of the \textit{2}-clubs of the European borough, with a median size of 16 in a size range of 5-24. Altogether the hamlets of the borough cover nearly all \textit{i.e.} 92.5\% of its 225 nodes. An example is given by \textit{Figure} \ref{fig:HamlABB} to which we will return later.\\

Heemskerk (2013, p. 91) cites \textit{Compagnie Nationale \`{a} Portefeuille SA}, a Belgian investment holding of the Fr\`{e}re family, as most involved in European interlocks, with 17 European and 2 national (Belgian) interlocks.
We investigate this conclusion further in terms of its participation in major \textit{2}-clubs of the European borough of corporate interlocks.\\
\begin{figure}
\centering
\includegraphics[width=0.6\linewidth]{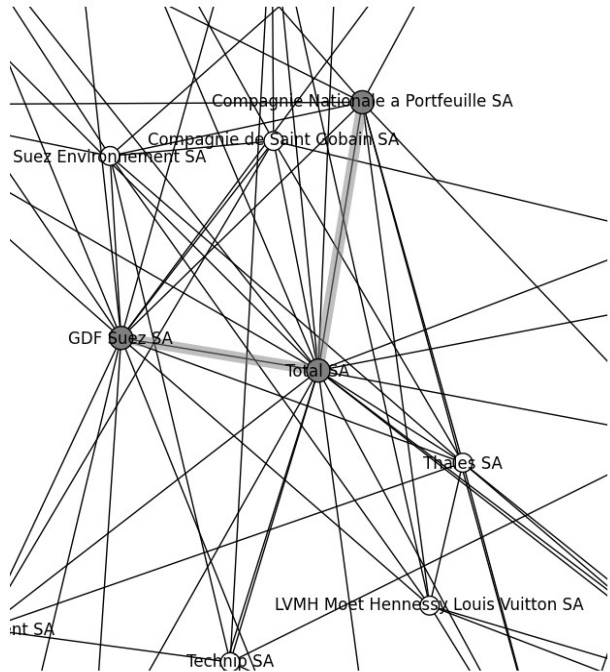}
\caption{European corporate Borough 2010: detail of center of largest Social Circle (size 25) with the two central pairs shaded.}
\label{fig:SCTotSuezCieNat}
\end{figure}\\
\
The second,  \textit{lower} part of \textit{Table 2} summarizes this analysis. It was a member of almost half (990: 46.5 \%) of the \textit{2}-clubs of the  borough.\
At the most local level it was a member of 15 (10.9\%) of the coteries of the borough. Of these 15 ego-networks,  identified by their central (ego) company and ordered by size, it was the center of the sixth coterie (size 22). Of the other 14 coteries ten were large French companies,two were Luxembourg based, followed by another Belgian company  and a German company.

This predominant francophone orientation suggests that it was more part of the French  regional network than of a cross-European one. Widening the level of close communication to its (\textit{Compagnie Nationale \`{a} Portefeuille}) membership of social circles and hamlets of the borough confirmed this impression: it was part of 284 social circles (39.6\%)and 691 (54.3\%

) hamlets. Among the largest social circles it formed part of one or more central pairs with the the five largest French companies, as illustrated by Figure \ref{fig:SCTotSuezCieNat}, where it forms one of the two central pairs with the French company \textit{Total SA}.\\

We therefore studied its \textit{2}-club memberships together with those of the largest French bank: \textit{BNP Parisbas SA}, as summarized  in the lower part of \textit{Table} \ref{Bor2Comps}. \textit{BNP Paribas SA} itself was included in 467 (21.9\%) of the \textit{2}-clubs of the borough.  
The combined membership of \textit{Compagnie Nationale \`{a} Portefeuille} or \textit{BNP Parisbas} accounted for 1109 (52.1\%), or more than half of the 2128
 \textit{2}-clubs in the European borough. In 348 of those they participated together. 
Consequently \textit{Compagnie Nationale \`{a} Portefeuille} participated in almost three quarter (74.5\%) of the \textit{2}-clubs to which  \textit{BNP Parisbas} belonged.\

Hence in terms of \textit{2}-club memberships the Belgian \textit{Compagnie Nationale \`{a} Portefeuille} was clearly a part of the center of the French corporate sub-network in 2010.
\begin{figure}
\centering
\includegraphics[width=0.5\linewidth]{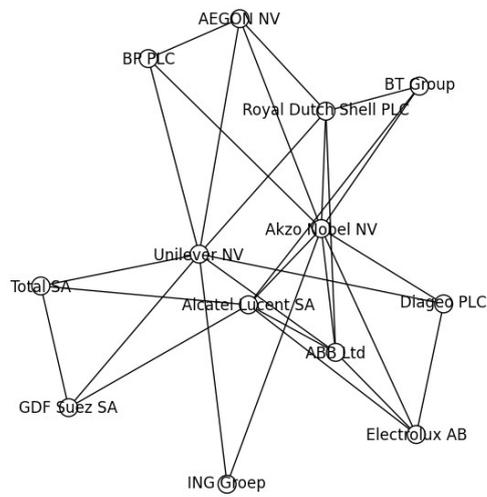}
\caption[short]{Hamlet (size 13): Swiss ABB Ltd with 12 French, British, Dutch or Swedish firms.}
\label{fig:HamlABB}
\end{figure}
\\
Subsequent developments appear to support this conclusion.  The controlling Belgian holding \textit{ERBE}, for 53\% owned by the Belgian Fr\`{e}re family and for 47\% by \textit{BNP Parisbas}, removed  \textit{Compagnie Nationale \`{a} Portefeuille}  from the Belgian stock exchange on 2 May 2011, after a succesful bid for outstanding stock. This appeared to be part of a familial succession strategy and an agreement allowing \textit{BNP Paribas} to withdraw from \textit{ERBE}. In a press release of December 10, 2013 \textit{BNP Paribas} announced its completion of this arrangement through the purchase by the \textit{Fr\`{e}re Group} of the entire \textit{BNP Paribas} shareholding in \textit{ERBE}.\\

As the second firm, most involved in European interlocks, with a reported 14 European and 2 national interlocks,  Heemskerk (\textit{l.c}) cites \textit{ABB Ltd}, a Swiss based multinational corporation operating mainly in power and automation technology, such as robotics.
In this case his conclusion appears to be fully supported by investigating its \textit{2}-club memberships in the corporate European borough for 2010.
Not surprisingly it was central ego of a coterie (size 17) of the borough, consisting  of companies from six European nations: 3 German, 2 French, 3 Swiss, 5 Swedish, 3 Dutch and 1 Finnish. 

\textit{ABB Ltd} participated in 64 social circles (size 7-21): the first 5 largest social circles solidly German with central pairs from the largest German companies. After those follow a number of mainly Swedish social circles and a number of social circles of mixed nationality.

At the widest level of close communication \textit{ABB Ltd} participated in 75 hamlets (size 5-19) of different nationalities. An example is given with the hamlet of  Figure \ref{fig:HamlABB}, containing thirteen firms: three French, four British, one Swedish and one Swiss, ABB Ltd itself.\\

For a more elaborate analysis, beyond the scope of this paper, we refer to \citep{Mokken2015b}

\subsubsection{Boroughs in DBLP co-authorship networks}
We use the DBLP database of Computer Science publications to obtain some insights on the availability of boroughs in large real live networks\footnote{Downloaded from \url{http://dblp.uni-trier.de/xml} at 2012-02-21.}. DBLP can be seen as a bipartite network consisting of authors and publications as nodes connected by the   \lq is author of\rq \space relation.
For a given integer threshold  $t$, we  induce  an undirected  co-authorship network  between authors from the DBLP network by relating two author nodes when they have coauthored at least $t$ publications.
For $t$ between 5 and 10, Table~\ref{DBLP} contains basic statistics about the number of boroughs, and their distribution according to their size. Thus, for a threshold $t$, the number of nodes in the $t$-co-authorship network is the number of authors who have at least $t$ joint publications with one other author. The density of the resulting networks is fairly stable and slowly increases from $2\cdot 10^{-5}$ for $t=5$ to $4\cdot 10^{-5}$ for $t=10$.\\

We can draw several conclusions from this experiment:
\begin{table}
\begin{footnotesize}
\begin{center}
\begin{tabular}{r rr rrr}
\hline
\textbf{$t$}& \textbf{\#nodes} & \textbf{\#edges} &   \textbf{\#Boroughs} & \textbf{5  largest sizes }\\ 
\hline
5&152018&239309&15550&[23414,211,91,78,74]\\ 
6&118423&169328&12177&[8054,153,81,51,50]\\ 
7&94784&125308&9713&[6294,144,51,50,39]\\ 
8&77288&95270&7701&[4741,137,90,56,46]\\ 
9&63916&74185&6115&[3743,121,45,42,40]\\ 
10&53596&59071&4913&[2907,118,43,36,34]\\ 
\\
\hline
\end{tabular}
\end{center}
\end{footnotesize}
\caption{\label{DBLP}
Basic statistics about the Borough distribution in the co-authorship networks based on DBLP, for thresholds on the number of coauthored publications ranging from $ t= $ 5 to 10.}
\end{table}
\begin{itemize}
\item Large sparse networks contain relatively many boroughs,  roughly an order of magnitude smaller than the number of nodes in the network.
\item Large networks contain one \lq giant\rq \space borough, whose number of nodes is roughly an order of magnitude smaller than the number of nodes in the network.
\item    The number of nodes of all boroughs except the \lq giant\rq \space one is small.
\item The sizes of the boroughs of the DBLP co-authorship networks are distributed according to a power-law. 
\end{itemize}
Given the complexity of finding basic cycles for such large networks and available capacities, we only computed the boroughs for $t$ from 5 to 10.

\section{Discussion}
 
 Our reanalysis of the Zachary karate club network demonstrated the usefulness of \textit{2}-club analysis for relatively small 'very local' networks \citep{Faust2007}. The next example, concerning the networks of corporate interlocking directorates for Europe in 2010, illustrated the huge numbers of distinct, but overlapping \textit{2}-clubs of the three types to  be expected for larger, possibly dense networks. However, once the boroughs and their \textit{2}-clubs are detected, identified and stored, the challenge of their analysis can be met with the plethora of currently available statistical methods of search, data mining and matching techniques of massive databases.\\
 
 Our exercise with the large DBLP data set shows that a much larger challenge will be how to combine the micro, \textit{i.e} very local, in-depth focus of close communication by boroughs and \textit{2}-clubs with the global analysis of the Big Data massive networks which currently confront community detection. Promising techniques can be based on the analysis of appropriate segments of such networks, using their hierarchical modularities with techniques such as proposed by \cite{Blondel2008} or by focusing on selected 2-neighborhoods.\\
 
 Finally, some researchers (\textit{e.g.} \citep{Hartung2012}) have noted that the largest \textit{2}-clubs they found in real-world networks just coincided with the ego-network of a node with maximum degree ($\Delta\left(G\right)+1$). As the size of a coterie cannot be larger than that limit, any \textit{2}-club of larger size than the maximum degree plus one must be a hamlet or a social circle.
In our analyses of various real world networks we also did not find a social circle or hamlet larger than the maximum degree, the largest coterie.\

 As it is not difficult to construct examples of networks with hamlets or social circles which exceed that limit, a question of further research is to hunt for empirical, real-world datasets where that is indeed the case. Networks with no or limited preferential attachment seem likely candidates.\\
 
\blind{\section*{Acknowledgment}
We thank Johan van Doornik for several suggestions in the initial stages of our research and are grateful to some referees for suggestions for improvement.}

 
\bibliographystyle{elsarticle-harv}  
\bibliography{Close-Communities}

 \end{document}